\newcommand{\EQ}{EQ}
\newcommand{\ST}{ST}
\newcommand{\FLAT}{FL}
\newcommand{\p}{\partial}
\DeclareMathOperator{\GAP}{GAP}
\begin{document}

\title{Generalization capabilities of translationally equivariant neural networks}

\author{Srinath Bulusu}
\email{sbulusu@hep.itp.tuwien.ac.at}
\author{Matteo Favoni}
\email{favoni@hep.itp.tuwien.ac.at}
\author{Andreas Ipp}
\email{ipp@hep.itp.tuwien.ac.at}
\author{David I.~M\"uller}
\email{dmueller@hep.itp.tuwien.ac.at}
\author{Daniel Schuh}
\email[Corresponding author: ]{schuh@hep.itp.tuwien.ac.at}
\affiliation{Institute for Theoretical Physics, TU Wien, Austria}

\date{\today}

\begin{abstract}
	The rising adoption of machine learning in high energy physics and lattice field theory necessitates the re-evaluation of common methods that are widely used in computer vision, which, when applied to problems in physics, can lead to significant drawbacks in terms of performance and generalizability. One particular example for this is the use of neural network architectures that do not reflect the underlying symmetries of the given physical problem. In this work, we focus on complex scalar field theory on a two-dimensional lattice and investigate the benefits of using group equivariant convolutional neural network architectures based on the translation group. For a meaningful comparison, we conduct a systematic search for equivariant and non-equivariant neural network architectures and apply them to various regression and classification tasks. We demonstrate that in most of these tasks our best equivariant architectures can perform and generalize significantly better than their non-equivariant counterparts, which applies not only to physical parameters beyond those represented in the training set, but also to different lattice sizes.
\end{abstract}

\maketitle

\section{Introduction\label{sec:introduction}}

Machine learning has become an increasingly popular tool for a diverse range of applications in physics. Particularly suitable for the analysis of spatially arranged data are Convolutional Neural Networks (CNNs). Modern CNN architectures are based on the idea that a network's prediction should not change when the input is shifted. They rely on two key ingredients that have already been introduced by the Neocognitron~\cite{Fukushima:1980} over 40 years ago: convolutional layers (S-cells) and pooling (subsampling, downsampling) layers (C-cells). This incorporation of a translational symmetry was an essential advantage over its predecessor, the Cognitron~\cite{Fukushima:1975}. However, equivariance under translations is not guaranteed in a generic CNN, even though it is the idea behind weight sharing in the convolutional layers.

In the past decade, the computer vision community has created many different machine learning algorithms and continues to refine them. During the ImageNet Large Scale Visual Recognition Challenge (ILSVRC)~\cite{Russakovsky:2015}, which was a popular competition that was held annually from 2010 until 2017, the performance of CNNs steadily increased, and in 2012, AlexNet~\cite{Krizhevsky:2017} was the first CNN to win the classification task. However, its first convolutional layer already breaks translational equivariance by using a stride of four, as do three max pooling layers with a stride of two that are part of the network. Additionally, the output of the last convolutional layer is flattened before it is passed to the dense layers of the network. \mbox{LeNet-5}~\cite{Lecun:1998}, a very early CNN, uses a stride of one in the convolutional layers, but the average pooling layers with a stride of two break translational symmetry. An important step towards a translationally equivariant network architecture has been made by the introduction of global pooling layers. Global Average Pooling (GAP) was first introduced in~\cite{Lin:2014}, and the first winning network of the ILSVRC's classification task that makes use of it is ResNet~\cite{He:2015} from 2015's competition.

The grand success of machine learning in many different tasks has also garnered attention within other research communities. Although many ingredients can be carried over from computer vision, differences in the tasks may require a different treatment. A lot of effort has been made to incorporate global \cite{Cohen:2016, Cohen:2016a, Worrall:2017, Worrall:2018, Ecker:2018, Veeling:2018, Cohen:2019, Lafarge:2020, Pang:2020, Nicoli:2021, Scaife:2021} and gauge \cite{Kanwar:2020, Boyda:2020, Favoni:2020} symmetries in the network architecture, since they play a central role in modern physics, among other fields. Nevertheless, the most basic one, translational symmetry, is often not strictly enforced despite the fact that the task would allow for it. Oftentimes, the data are flattened somewhere in the network, as e.g.~in~\cite{Zhou:2019, Wetzel:2017, Bachtis:2020, Bachtis:2020a, Bluecher:2020, Padavala:2021}, and sometimes a convolutional or pooling operation with a stride greater than one spoils symmetry under translations, even though a global pooling layer constitutes the transition from the convolutional part of the network to its dense part, e.g.~in \cite{Wang:2021}. There are examples that make explicit use of this symmetry though, such as \cite{zhang:2021}, and we want to raise awareness that one should take it into account when choosing a network architecture.

In this paper, we focus on translational symmetry in CNNs. In addition to providing theoretical reasons for choosing a translationally equivariant architecture, we conduct experiments with three types of architecture on three different machine learning tasks. We try to find well-performing architectures for each of these types by doing an extensive search with the optimization framework \textit{optuna}~\cite{Akiba:2019}. We furthermore investigate the generalization capabilities of the three network types in two different ways. First, we examine how models generalize to different sets of physical parameters at a fixed lattice size, and, second, we inquire how well they generalize to other lattice sizes. The latter is not possible for networks including a flattening step, since they require a fixed input size. Therefore, these types of model would have to be retrained for each new lattice size.

This paper is structured as follows: we first discuss translational symmetry in section~\ref{sec:translational_symmetry} and show under which circumstances CNNs are indeed respecting equivariance under translations, as well as certain pitfalls that break the symmetry, in section~\ref{sec:SymmetryProperties}. The next three sections are devoted to three different machine learning tasks pertaining to a complex scalar field on a lattice: a regression task with the aim of predicting two observables of said scalar field (section~\ref{sec:Regression}), a classification task in which the algorithm should judge whether or not the flux of a given lattice configuration is conserved (section~\ref{sec:Classification}) and another regression task, in which the network is supposed to figure out how many flux violations are present on the lattice (section~\ref{sec:MultipleWorms}). Section~\ref{sec:ConclusionsAndOutlook} contains our conclusions and possible future research avenues. The appendices encompass information about the complex scalar field (appendix~\ref{app:ComplexScalarField}), our datasets (appendix~\ref{app:Datasets}), some supplemental proofs for section~\ref{sec:SymmetryProperties} (appendix~\ref{app:Proofs}) and some additional analysis pertaining to the regression task in section~\ref{sec:Regression} (appendix~\ref{app:PartiallyObscuredInput}).

The code and data sets used in this work are published in a separate repository\footnote{See \href{https://gitlab.com/openpixi/scalar_ml}{https://gitlab.com/openpixi/scalar\_ml}}. 

\section{Translational symmetry} \label{sec:translational_symmetry}

In this section, we exemplify symmetry aspects on a complex scalar field and explain how they may impact the choice of machine learning models. The action of a complex scalar field~$\phi$ in an external potential~$V$ in $D$~dimensions can be written as
\begin{equation}
    S = \int \! \mathrm{d}^D x \, \mathcal{L}, \label{generic_action_of_a_complex_scalar_field}
\end{equation}
with the Lagrangian density
\begin{equation}
    \mathcal{L} = \p_{\mu} \phi^* \p^{\mu} \phi - V(\phi^* \phi).
\end{equation}
The latter is covariant under translations \mbox{$x^{\mu} \rightarrow x^{\prime \mu} = x^{\mu} + a^{\mu}$}, with a constant vector $a^\mu$. This can be seen by noting that the fields transform via~\mbox{$\phi'(x^{\mu}) = \phi(x^{\mu} - a^{\mu})$} and that the partial derivative is not influenced by translations~\mbox{$\p_{\mu}' = \p_{\mu}$}. The action given by Eq.~\eqref{generic_action_of_a_complex_scalar_field} is then invariant under translations. This has important implications for the resulting physical theory because finite continuous symmetries of the action lead to conserved quantities, according to Noether's first theorem \cite{Noether:1918}. The invariance under temporal translations entails energy conservation; the invariance under spatial translations leads to momentum conservation.

\begin{figure}
    \centering
    \subfigure[~Equivariant architecture (EQ)]{\includegraphics{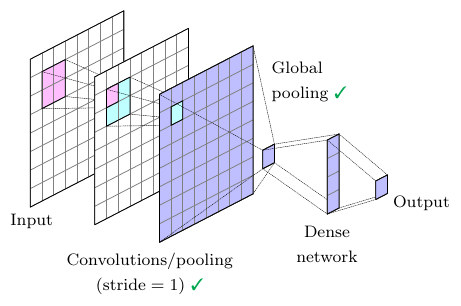}}
    \subfigure[~Strided architecture (ST)]{\includegraphics{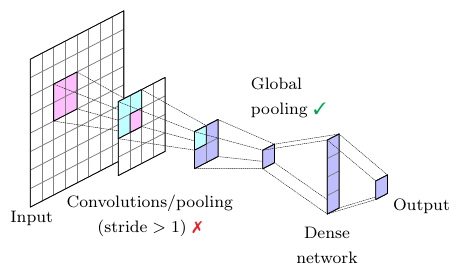}}
    \subfigure[~Flattening architecture (FL)]{\includegraphics{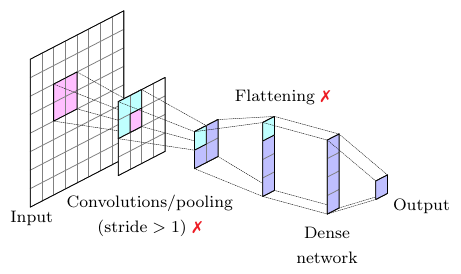}}
    \caption{
        The three different architecture types used in this study. The checkmark (\textcolor{Green}{\ding{51}}) or cross (\textcolor{Red}{\ding{55}}) indicate spatial operations in which translational symmetry is respected or violated, respectively. Translational symmetry can be violated by convolutional or pooling layers with a stride greater than one or by a flattening layer. A global pooling layer allows for the application of the same network to different lattice sizes. Each of the layers can have a number of channels (not depicted) without affecting the translational symmetry properties.}
    \label{fig:architectures}
\end{figure}

Another important symmetry of the action in Eq.~\eqref{generic_action_of_a_complex_scalar_field} is a global $U(1)$ symmetry, given by \mbox{$\phi \rightarrow e^{i \alpha} \phi$}. It implies the existence of a conserved four-current $j^\mu$ and allows the definition of a chemical potential $\mu$. The action can be modified to directly include the chemical potential via
\begin{equation}
    S = \int \! \mathrm{d}x_0\, \mathrm{d}^{D-1} \mspace{-2mu} x \mspace{2mu} \left( \lvert D_0 \phi \rvert^2 - \lvert \partial_i \phi \rvert^2 - V(\lvert \phi \rvert) \right), \label{action_complex_scalar_field_continuum_in_D_dimensions}
\end{equation}
with $D_0 = \partial_0 - i \mu$.

In the following, we consider a complex scalar field in \mbox{$1 + 1$} dimensions in a quartic potential
\begin{equation}
    V(\lvert \phi \rvert) = m^2 \lvert \phi \rvert^2 + \lambda \lvert \phi \rvert^4
\end{equation}
on the lattice with periodic boundary conditions. The parameters of the potential are the mass~$m$ and the coupling constant~$\lambda$. A discretized version of the action in Eq.~\eqref{action_complex_scalar_field_continuum_in_D_dimensions} retains its invariance under discrete translations. We then switch to a dual representation, called flux representation. It describes the same physical content as the original representation, but the variables are four integer fields (link variables) $k_{x, \nu}$ and~$l_{x, \nu}$, with~$\nu = 1,2$, instead of the complex scalar field. The corresponding partition function reads
\begin{IEEEeqnarray}{rCl}
    \nonumber Z &=& \sum_{ \{ k, l \} } \left( \prod_{x, \nu} \frac{1}{(\lvert k_{x, \nu} \rvert + l_{x, \nu})! l_{x, \nu}!} \right) \left( \prod_x e^{\mu k_{x,2}} W(f_x) \right) \\
    && \> \times \left( \prod_x \delta \left( \sum_\nu (k_{x, \nu} - k_{x-\hat{\nu}, \nu}) \right) \right), \IEEEeqnarraynumspace \label{trans:flux_representation}
\end{IEEEeqnarray}
where the outer sum is a shorthand for
\begin{equation}
    \sum_{ \{ k, l \} } = \prod_{x, \nu} \mspace{8mu}\sum_{k_{x, \nu} = - \infty}^\infty \mspace{8mu} \sum_{l_{x, \nu} = 0}^\infty.
\end{equation}
The function~$W(f_x)$ is given by
\begin{equation}
    W(f_x)=\int_0^\infty \mathrm{d}x\, x^{f_x+1}\mathrm{e}^{-\eta x^2-\lambda x^4}, \label{trans:W}
\end{equation}
and the integer field $f_x$ is defined as
\begin{equation}
f_x=\sum_\nu[|k_{x,\nu}|+|k_{x-\hat{\nu},\nu}|+2(l_{x,\nu}+l_{x-\hat{\nu},\nu})]. \label{trans:f_x_definition}
\end{equation}
The dual formulation incorporates the same symmetry properties as the
original formulation. A more detailed explanation of this procedure is given in appendix~\ref{app:ComplexScalarField} and in~\cite{Gattringer:2013b}. To ensure the flux conservation demanded by the Kronecker $\delta$ symbol in Eq.~\eqref{trans:flux_representation}, the worm algorithm~\cite{Prokofev:2001} has been employed to update the link variables~$k_{x, \nu}$. It is a local algorithm that updates contiguous field values on the lattice in successive steps. The resulting structures are known as worms. When the head of a worm meets its tail, a worm is closed; otherwise, it is open. Details about the generation of our datasets can be found in appendix~\ref{app:Datasets}.

This dual representation in two dimensions allows for a strong analogy with two-dimensional images. While every pixel of an image is described by one (grayscale) or three (color) numbers, every position of the dual lattice is described by four values. An important difference between them is their boundary conditions. Typically, image applications employ fixed boundary conditions, which break translational equivariance at the boundaries. In contrast, using periodic boundary conditions on the lattice, translational equivariance can be preserved.

The aforementioned four integer fields of the flux representation are used as the input for the upcoming machine learning tasks. In these tasks, the intensive or extensive nature of an observable are important for the choice of a global pooling layer because the networks should be able to generalize to other lattice sizes apart from the one they have been trained on. In a regression task, an intensive quantity requires a global average pooling layer, while an extensive quantity calls for a global sum pooling layer. In a classification task, it is not the physical observable itself that is predicted but a decision boundary, so the choice of global pooling layer is more subtle.

We are interested in the generalization capability to larger lattices because usually studies on the lattice are intended as an approximation of the real case of an infinite spacetime background. Therefore, a compromise has to be found between lattice size and computational effort such that the simulation produces results satisfactorily close to the physical ones in a reasonable amount of time. The approach adopted throughout this paper is to train models on small lattices and examine how well they generalize to larger lattices. 

The three machine learning tasks that are described at the end of section~\ref{sec:introduction} are tackled with three different types of CNN architecture, which are depicted in Fig.~\ref{fig:architectures}:
\begin{itemize}
    \item a translationally equivariant architecture (\EQ{}) that uses only layers of stride one and a global pooling layer and is applicable to different lattice sizes,
    \item a strided architecture (\ST{}) that breaks translational equivariance due to spatial pooling layers with a stride greater than one but is still suited to give predictions on different lattice sizes due to a global pooling layer, and
    \item a flattening architecture (\FLAT{}) that represents a ``traditional'' architecture that breaks translational equivariance due to spatial pooling layers with a stride greater than one and a flattening layer, which restricts its usage to one particular lattice size.
\end{itemize}
After the global pooling or flattening step, a dense feed-forward network, which is also known as multi-layer perceptron (MLP), can be attached without altering the translational equivariance properties of the network. A discussion about which network layers are translationally equivariant and which ones are not, as well as what exactly breaks said equivariance, can be found in section~\ref{sec:SymmetryProperties}.

Dense neural networks can be seen as universal function approximators~\cite{Cybenko:1989, Lu:2017} and as such, they do not respect any particular symmetries. Such symmetries can be implemented in (or ``hard-baked'' into, as other authors~\cite{Worrall:2017} call it) the network architecture, though, so that the function that is learned is restricted to respect a certain symmetry by design, independent of any training. We expect that such a restriction, as incorporated by \EQ{} architectures, is beneficial and that CNNs of that kind therefore outperform CNNs of the other two kinds.

Alternatively, symmetries can be learned, which can be encouraged by augmenting the training data according to the desired symmetry transformation. Thus, we expect data augmentation to improve the performance of networks of \ST{} and \FLAT{} architectures. Note, however, that when using data augmentation, it is not guaranteed that the network respects the symmetry even on the training set, let alone on the test set. In addition, data augmentation only establishes a relation between the input layer and the output layer; it does not require the hidden layers in between to respect the symmetry.

From a more theoretical standpoint, a CNN can be seen as a special case of an MLP, where the latter learns to set its weights so that the receptive fields are local (by setting the other weights to zero) and to ``share'' the appropriate weights (by setting them to the same value). The idea behind the CNN was, though, that this does not have to be learned but can be implemented in the architecture.

Note that if the observable to be studied violates the symmetry that is implemented in the network, it cannot be properly approximated by design. Therefore, it is important to understand the symmetry properties of the task before selecting a particular architecture.

\section{Symmetry properties of machine learning layers} \label{sec:SymmetryProperties}

If a network layer's output is invariant under a symmetry transformation of the input, the outputs of all subsequent layers are invariant under the symmetry transformation of the former layer's input as well. However, invariance of the network's prediction does not require every individual layer to be invariant under this symmetry. The more general concept of equivariance not only is a sufficient requirement, but also allows for more expressive networks. Group Equivariant Convolutional Neural Networks (G-CNNs)~\cite{Cohen:2016} exploit symmetry transformations that are described by a group~$\mathcal{G}$. Conventional CNNs can be seen as a special case of G-CNNs, with the translation group~$\mathbb{T}$ as their symmetry group, i.e.~\mbox{$\mathcal{G} = \mathbb{T}$}.

The following discussion about equivariance is based on~\cite{Cohen:2016}. The condition for equivariance of a network layer~$\Phi$ under a group transformation~$L_g$ by the element $g\in \mathcal{G}$ is given by
\begin{equation}
        \Phi(L_{\! g} \, x) = L^{\mspace{2mu} \prime}_{\! g} \, \Phi(x). \label{layers:equivariance}
\end{equation}
Note that \mbox{$L_g \neq L^{\mspace{2mu} \prime}_{\! g}$}, in general, and that invariance under~$L_g$ is the special case \mbox{$L^{\mspace{2mu} \prime}_{\! g} = 1$}.

\subsection{Convolutional layers}
	
On a two-dimensional rectangular lattice, a convolution\footnote{The cross-correlation in signal processing is often referred to as convolution in the machine learning community. For this paper we adopt this nomenclature. Additionally, we disregard a possible bias term $b$ to be added to Eq.~\eqref{layers:cross_correlation} without loss of generality.
}
is defined as
\begin{equation}
	[f \star \psi](x) = \sum_{y \in \mathbb{Z}^2} f(y) \psi(y - x) = \sum_{y \in \mathbb{Z}^2} f(x + y) \psi(y), \label{layers:cross_correlation}
\end{equation}
where the feature map $f$ and the kernel (or filter) $\psi$ are real-valued functions
\begin{align}
f : \mathbb{Z}^2 &\rightarrow \mathbb{R}, \\
\psi : \mathbb{Z}^2 &\rightarrow \mathbb{R}.
\end{align}
The kernel $\psi$ is assumed to have finite support $\Psi \subset \mathbb{Z}^2$, i.e.~there is only a finite number of points on $\mathbb{Z}^2$ where $\psi$ is non-zero. In principle, this allows us to restrict the sum on the RHS of Eq.~\eqref{layers:cross_correlation} to $\Psi$, but for simplicity we keep the sum running over all of $\mathbb{Z}^2$. For our purposes, the feature map $f$ is understood to be defined on a finite, rectangular proper subset $F \subset \mathbb{Z}^2$ with periodic boundary conditions. We can avoid explicitly dealing with periodic boundaries by assuming that the feature map periodically repeats outside $F$. In machine learning frameworks such as \textit{PyTorch} \cite{Paszke:2019}, periodic boundary conditions are enforced through the use of circular padding. As a result of periodicity, the output of the convolution Eq.~\eqref{layers:cross_correlation} has the same size as the feature map $f$. 

We define a translation of the feature map via
\begin{align}
[L_t f](x) = f(x - t),
\end{align} 
where $t \in \mathbb{T}$ is an element of the translation group, which can be identified with an element of $\mathbb{Z}^2$. The convolution is equivariant under translations due to
\begin{align}
	[L_t f \star \psi](x) &= \sum_{y \in \mathbb{Z}^2} f(y - t) \psi(y - x) \nonumber \\
	&= \sum_{y' \in \mathbb{Z}^2} f(y') \psi(y' - (x - t)) \nonumber \\
	&=	[f \star \psi](x - t) \nonumber \\
	&=	[L_t [f \star \psi]](x).
\end{align}

Equation \eqref{layers:cross_correlation} assumes the convolution to have a stride~$s$ of one, i.e.~the number of points that the kernel is shifted when the convolution is performed is one. More generally, convolutions with strides \mbox{$s \ge 1$} can be written as
\begin{equation}
	[f \star \psi]_s(x) =  \sum_{y \in \mathbb{Z}^2} f(y) \psi(y - sx). \label{layers:cross_correlation_strides}
\end{equation}
This definition reduces to the original convolution if \mbox{$s=1$}. For \mbox{$s \geq 2$}, the output size of the convolution is smaller than the input size of the feature map~$f$. Strided convolutions with \mbox{$s \geq 2$} generally break translational equivariance. This can be demonstrated by considering a translation \mbox{$t \in \mathbb{T}$} with \mbox{$|t| < s$}. For example, we can choose \mbox{$t= (1, 0)$}. Performing this translation on the input feature map~$f$ yields
\begin{align}
	[L_{t} f \star \psi]_s (x) &= \sum_{y \in \mathbb{Z}^2} f(y - t) \psi(y - s x) \nonumber \\
 &= \sum_{y' \in \mathbb{Z}^2} f(y') \psi(y' - s x + t) \nonumber \\
  &= \sum_{y' \in \mathbb{Z}^2} f(y') \psi(y' - s (x - t / s)).
\end{align}
In order for the above expression to be equivariant, we would need to be able to rewrite it in terms of a shifted position $x' = x - t/s \in \mathbb{Z}^2$. However, this is not possible because $t = (1, 0)$ is not divisible by $s \geq 2$. On the other hand, the strided convolutions are equivariant if we consider only the subgroup $\mathbb T_s \subset \mathbb T$ consisting of translations by multiples of $s$ lattice points. In that case, any element $t \in \mathbb T_s$ is divisible by $s$ and therefore
\begin{align}
	[L_{t} f \star \psi]_s (x) &= \sum_{y \in \mathbb{Z}^2} f(y - t) \psi(y - s x) \nonumber \\
	&= \sum_{y' \in \mathbb{Z}^2} f(y') \psi(y' - s (x - t / s)) \nonumber \\
		&= \sum_{y' \in \mathbb{Z}^2} f(y') \psi(y' - s x') \nonumber \\
				&=[L_{t'} [f \star \psi]_s](x),
\end{align}
where $t' = t / s \in \mathbb T$.  This means that a convolutional layer with a given stride is equivariant only under translations that are a multiple of that stride. Equivariance under all possible translations is given only for~\mbox{$s = 1$}. The generalization to more than one feature map, i.e.~multiple channels, is straightforward. Note that a convolution with~$s \ge 2$ is equivalent to a convolution with~\mbox{$s = 1$} combined with a subsequent subsampling step.

\subsection{Spatial pooling layers}

Spatial pooling layers are usually used to subsample, i.e.~\mbox{$s \ge 2$} in pooling layers. For this discussion, let us split this layer up into a pooling step and a subsampling step. Since average pooling is equivalent to a special case of a convolution, where all weights of~$\psi$ are identical and given by~\mbox{$1 / | \Psi |$}, with $|\Psi|$ denoting the cardinality of~\mbox{$\Psi$}, the average pooling step is equivariant under translations. The subsequent subsampling, however, breaks this equivariance, which again leads to equivariance only under translations that are a multiple of the spatial average pooling layer's stride.

This holds not only for average pooling though, but for spatial pooling in general: we take again the pooling step by itself, or equivalently, with~\mbox{$s = 1$}. It acts on the feature map~$f$ by performing the same operation on subsets~$U_x$ of $F$
\begin{equation}
    P f(x) = \mspace{-3mu} \underset{y \in U_x}{P} \mspace{-3mu} f(y).
\end{equation}
These subsets correspond to the kernel of the pooling operation. Its dependence on~$x$ depicts the ``sliding'' of the kernel over the feature map. A spatial pooling step respects Eq.~\eqref{layers:equivariance}, as can be seen by
\begin{align}
    P L_t f(x) &= \mspace{-3mu} \underset{y \in U_x}{P} \mspace{-3mu} f(y - t) \nonumber \\
    &= \mspace{-10mu} \underset{y' \in U_{x - t}}{P} \mspace{-10mu} f(y') \nonumber \\
    &= L_t P f(x).
\end{align}
Thus, also in a spatial pooling layer it is the stride that restricts the equivariance of the layer to translations by multiples of said stride.

We want to stress that spatial pooling layers with~\mbox{$s = 1$} respect translational equivariance and can therefore be included if one desires an architecture that incorporates such a symmetry, albeit in a different role than usual because it does not subsample.

\subsection{Global pooling}

If we wanted to use a traditional CNN architecture on a two-dimensional lattice with periodic boundary conditions, we would have another problem as well: the last convolutional or pooling layer is often flattened and densely connected to the linear layers at the end of the network. Since different positions in one feature map are connected to different weights without a ``sliding'' kernel, this is another point where translational equivariance is broken. A possible solution to this problem is a global pooling layer between the last convolution and the first dense layer. The GAP layer was first introduced in~\cite{Lin:2014}. There, the authors proposed to create one feature map for each class and to feed the average of each feature map directly to a softmax layer. This approach would respect translational symmetry, although, in general, dense layers could be used between the global pooling and the softmax operation.

\subsection{Equivariant architectures}

On the aforementioned two-dimensional lattice with periodic boundary conditions and for similar problems we propose the following network architecture for classification and regression tasks: the input is fed to a convolutional layer with a stride of one and circular padding so that the output of the convolution has the same size as its input. The kernel size can be odd or even. Translational equivariance is retained by applying consecutive convolutional layers, all with~\mbox{$s = 1$}, with non-linear activation functions in between. Activation functions do not influence the symmetries of an individual layer, since they are applied pointwise. If information from different scales is required, dilated convolutions~\cite{Yu:2016} can be used with a stride of one. Since dilated convolutions are equivalent to convolutions with a larger kernel and the appropriate weights set to zero, they are also equivariant under translations if their stride is one. Spatial pooling layers for subsampling, which use $s > 1$, break translational equivariance, but it is still possible to use them with \mbox{$s = 1$} between convolutional layers. A way of subsampling that respects translational equivariance is rendered possible by coset pooling~\cite{Cohen:2016}. However, since this is a non-local operation, we do not expect it to be suitable for the machine learning tasks discussed in this paper, which focus on local quantities and predictions. In the special case of translationally invariant functions, we suggest to utilize a global pooling layer after the last convolution. The output of the global pooling layer is translationally invariant, and therefore the rest of the network can be a general MLP without breaking the symmetry.

There is still one important point to be made: every layer before the GAP respecting translational equivariance is sufficient to guarantee invariance under translations after the GAP, but it is not necessary. If a spatial average pooling layer that breaks translational equivariance and a subsequent convolutional layer are inserted just before the GAP, the output of the GAP can still be invariant under translations, depending on their strides (theorem~\ref{proofs:theorem:SpatConvGAP} in appendix~\ref{app:Proofs}). If there is an activation function after the convolutional layer, as is usually the case, the GAP's output is, in general, no longer invariant under translations. The activation function is also necessary for the convolution not to lead to a single multiplicative and additive factor of the GAP, as is shown in lemma~\ref{proofs:lemma:ConvAndGAP} in appendix~\ref{app:Proofs}. We thus stick to the aforementioned sufficient conditions for translational equivariance and apply an activation function after the convolutional layer right before the GAP.

\section{Regression: predicting observables on the lattice} \label{sec:Regression}

This section revisits a regression task that has previously been performed in~\cite{Zhou:2019}: given a lattice configuration as input, the network shall predict two physical observables, namely the particle density~$n$ and the lattice averaged squared absolute value of the field~$|\phi|^2$. The former is given by
\begin{equation}
    n=\frac{1}{N_x N_t}\sum_x k_{x, 2}, \label{Regression:n_formula}
\end{equation}
where the summation of one of the four integer fields~$k_{x,2}$ runs over all $N_x N_t$ lattice sites. The latter is given by
\begin{equation}
    |\phi|^2 = \frac{1}{N_x N_t}\sum_x\frac{W(f_x+2)}{W(f_x)}, \label{Regression:phi2_formula}
\end{equation}
which contains the highly non-linear function~$W(f_x)$. It is given in Eq.~\eqref{trans:W} and depends on all four integer fields.

The function~$W(f_x)$ also depends on the physical parameters~$\lambda$, $\eta$ and~$\mu$, which are set to the same values as in~\cite{Zhou:2019}. Concretely, the values of the coupling constant~$\lambda$ and the mass~$m$ will be kept fixed in this task ($\lambda=1$, $\eta=4+m^2=4.01$), and the chemical potential~$\mu$ lies in the interval~\mbox{$\mu \in [0.91, 1.05]$}, with steps of \mbox{$\Delta \mu = 0.005$}.

In~\cite{Zhou:2019}, the networks have been trained on lattice configurations and observables that have been generated with two values of~$\mu$, specifically the outermost values~\mbox{$\mu = \{ 0.91, 1.05 \}$}, but tested on data that have been created on the whole given interval of the chemical potential. This allows for an analysis of the architectures' generalization capability to lattice configurations that correspond to chemical potentials that are not represented in the training set. We will follow this procedure, with the exception that we will use only a single~$\mu$ to generate training data, namely the uppermost one~\mbox{$\mu = 1.05$}. Since we would test on only smaller values of the chemical potential than the one that has been used for training in our approach, we deviate from~\cite{Zhou:2019} in that we create additional test data that contain higher values of the chemical potential. They lie in the interval~\mbox{$\mu \in [1.1, 1.5]$}, with steps of \mbox{$\Delta \mu = 0.1$}. This renders possible an analysis of the architectures' generalization capability to lattice configurations that correspond to values of~$\mu$ that are greater than the one used to create the training set. We will come back to these test data only at the end of this section. In addition to the generalization ability to different values of the chemical potential, we will investigate the generalization ability to lattice sizes that the models have not been trained on. This highlights a key advantage of architectures that employ a global pooling layer between their convolutional and their dense layers over architectures that simply flatten the data, because the latter are restricted to a given input size.

\subsection{Architecture choice}

The datasets stem from a physical system, whose properties should be taken into account when choosing a network architecture for a model that should learn from said dataset. Let us assume for the following discussion that we have no knowledge of the exact form of Eqs.~\eqref{Regression:n_formula} and~\eqref{Regression:phi2_formula}.

First, the observables are invariant under arbitrary translations of the lattice configuration. This leads to the restriction of preferred architectures that has been proposed at the end of section~\ref{sec:SymmetryProperties}: the input is passed to a convolutional layer with a stride~$s = 1$ and circular padding that causes its output to have the same size as its input. Such layers are used consecutively, with non-linear activation functions in between. Optionally, spatial pooling layers with \mbox{$s = 1$} can be inserted. The output of such convolutional and pooling layers is equivariant under translations of the input. The output of the last convolution is fed to a global pooling layer, which makes it invariant under translations of the input. Then, the data are passed through an MLP with two output nodes, one for each observable.

Second, the observables are derivatives of the logarithm of the partition function on the lattice. The partition function is a product over quantities at each lattice site. The observables can therefore be written as a sum over the lattice. Consequently, we want to use a global pooling layer that respects this fact, which excludes global max pooling. Since the observables are intensive quantities and the network shall be able to generalize to different lattice sizes, global average pooling is the natural choice. The MLP at the end does not modify the intensive nature of the prediction. 

To check how the above theoretical considerations perform in practice, we want to compare the three types of architecture that are depicted in  Fig.~\ref{fig:architectures} from section~\ref{sec:translational_symmetry}. A fair comparison among these network architecture types is quite difficult. One could take a translationally equivariant architecture and break equivariance by inserting at least one spatial pooling layer with~\mbox{$s > 1$}. This would keep the number of parameters the same. However, having found a decent \EQ{} architecture, it is not guaranteed that the corresponding \ST{} architecture is a good one compared to other \ST{} architectures and vice versa. Also, keeping the weights constant may not lead to a fair comparison with \FLAT{} architectures.

Therefore, we define a space of possible architectures for each of the three types separately, which are illustrated in tables~\ref{tab:Regression:optuna_search_space_equiv} to~\ref{tab:Regression:optuna_search_space_non-equiv_FL}, and use an optimization procedure to find an adequate representative for each architecture type individually.

Table~\ref{tab:Regression:optuna_search_space_equiv} depicts the search spaces of \EQ{}, table~\ref{tab:Regression:optuna_search_space_non-equiv_GP} of \ST{} and table~\ref{tab:Regression:optuna_search_space_non-equiv_FL} of \FLAT{} architectures. The possible parameter values of the first run are inspired by manual trials, which also included different activation functions (\textit{ReLU, tanh, PReLU} and \textit{LeakyReLU}). Its results lead to modifications of the parameter space of the second run and the choice of \textit{LeakyReLU} for a suitable activation function. Both of them try $50$ different combinations of parameters in their optimization procedure on each training set, which will be specified in the next subsection. The extended search explores an enlarged parameter space with $100$~trials, also with unique combinations of parameter values, in order to check if a better architecture was missed during the first two runs due to the choice of a too small search space. This search involves only the largest training set. After every convolutional layer and after every linear layer but the last one, a \textit{LeakyReLU} activation function~\cite{Maas:2013} is applied. Its advantage over the \textit{ReLU} activation function is the avoidance of so-called dead or dying neurons, which never activate initially or become inactive during the training process.

\begin{table}[tbp]
    \centering
    \scriptsize
    \caption{Search spaces for \EQ{} architectures. It lists the possible number of convolutional (conv, \mbox{$s=1$}) and linear layers (lin), kernel sizes, the number of channels of the convolutional layers and the number of nodes in the linear layers. Spatial pooling layers with \mbox{$s=1$} seem to worsen the predictions and have therefore not been included in these search spaces.}
    \begin{ruledtabular}
    \begin{tabular}{lllll}
        & conv & lin & kernel size & channels/nodes \\
        \hline
        run 1 & $[2,3]$ & $[0,1]$ & $\{ (1 \mspace{-5mu} \times \mspace{-5mu} 1), (2 \mspace{-5mu} \times \mspace{-5mu} 2) \}$ & $\{4,8,16,24,32,48,64,80\}$ \\
        run 2 & $[2,4]$ & $1$ & $\{ (1 \mspace{-5mu} \times \mspace{-5mu} 1), (2 \mspace{-5mu} \times \mspace{-5mu} 2) \}$ & $\{4,8,16,24,32,48,64,80\}$ \\
        extended search & $[2,4]$ & $[0,3]$ & $\{ (1 \mspace{-5mu} \times \mspace{-5mu} 1), (2 \mspace{-5mu} \times \mspace{-5mu} 2) \}$ & $\{4,8,16,24,32,48,64,80\}$ \\
    \end{tabular}
    \end{ruledtabular}
    \label{tab:Regression:optuna_search_space_equiv}
\end{table}

\begin{table*}
    \centering
    \caption{Search spaces for \ST{} architectures. It shows the possible number of convolutional \mbox{$(s=1)$} and linear layers (abbreviated as in table~\ref{tab:Regression:optuna_search_space_equiv}), kernel sizes, the number of channels of the convolutional layers, the number of nodes in the linear layers, the number of spatial pooling layers (SPL, \mbox{$s=2$}) and the spatial pooling mode (SPM).
    }
    \begin{ruledtabular}
    \begin{tabular}{lllllll}
        & conv & lin & kernel size & channels/nodes & SPL & SPM \\
        \hline
        run 1 & $[2,4]$ & $[0,3]$ & $\{(1 \times 1), (2 \times 2)\}$ & $\{4,8,16,24,32,48,64,80\}$ & $\{1,2\}$ & $\{\mathrm{avg}, \mathrm{max}\}$ \\
        run 2 & $[2,4]$ & $[0,2]$ & $\{(1 \times 1), (2 \times 2)\}$ & $\{4,8,16,24,32,48,64,80\}$ & $\{1,2\}$ & avg \\
        extended search & $[2,4]$ & $[0,3]$ & $\{(1 \times 1), (2 \times 2)\}$ & $\{4,8,16,24,32,48,64,80\}$ & $\{1,2\}$ & $\{\mathrm{avg}, \mathrm{max}\}$ \\
    \end{tabular}
    \end{ruledtabular}
    \label{tab:Regression:optuna_search_space_non-equiv_GP}
\end{table*}

\begin{table*}
    \centering
    \caption{Search spaces for \FLAT{} architectures. It shows the possible number of convolutional \mbox{$(s=1)$} and linear layers, kernel sizes, the number of channels of the convolutional layers, the number of nodes in the linear layers, the number of spatial pooling layers \mbox{$(s=2)$} and the spatial pooling mode. The number of convolutional layers is not chosen directly but follows from the number of $1\times 1$ convolutions that are selected. The asterisk next to ``kernel size'' signifies that the kernel size of the convolution depends on its position. Two $2 \times 2$ convolutions with a respective subsequent spatial pooling layer are mandatory. Additional $1 \times 1$ convolutions are possible, namely before each of the $2 \times 2$ convolutions and between each of them and their corresponding subsequent spatial pooling layer. The abbreviations are the same as in table~\ref{tab:Regression:optuna_search_space_non-equiv_GP}.}
    \begin{ruledtabular}
    \begin{tabular}{lllllll}
        & conv & lin & kernel size$^*$ & channels/nodes & SPL & SPM \\
        \hline
        run 1 & $[2,6]$ & $[1,3]$ & $\{(1 \times 1), (2 \times 2)\}$ & $\{4,8,16,24,32,48,64,80\}$ & $2$ & $\{\mathrm{avg}, \mathrm{max}\}$ \\
        run 2 & $[2,6]$ & $[1,3]$ & $\{(1 \times 1), (2 \times 2)\}$ & $\{4,8,16,24,32,48,64,80\}$ & $2$ & avg \\
        extended search & $[2,6]$ & $[1,3]$ & $\{(1 \times 1), (2 \times 2)\}$ & $\{4,8,16,24,32,48,64,80\}$ & $2$ & $\{\mathrm{avg}, \mathrm{max}\}$ \\
    \end{tabular}
    \label{tab:Regression:optuna_search_space_non-equiv_FL}
    \end{ruledtabular}
\end{table*}

\ST{} architectures can be thought of as \EQ{} architectures with at least one spatial pooling layer with~\mbox{$s = 2$} in the convolutional part of the network. This is either an average pooling or a max pooling layer, both with a $2 \times 2$~kernel. A spatial pooling layer is neither directly applied to the input, nor inserted just before the global pooling layer. The position of the spatial pooling layer(s) is part of the search space, but it is restricted by the choice of the number of convolutional layers, as is the number of spatial pooling layers. If, e.g., two convolutional layers are chosen, there can only be one spatial pooling layer at only one specific position, that is between the convolutional layers.

\FLAT{} architectures are inspired by how we think one would construct a CNN traditionally for this machine learning problem. At its core are two \mbox{$2 \times 2$} convolutions, followed by a spatial pooling layer with a \mbox{$2 \times 2$} kernel and a stride of~$2$. Optionally, there can be a $1 \times 1$ convolution before each of the \mbox{$2 \times 2$} convolutions and between each of them and the respective following spatial pooling layer, leading to a possible total count of six convolutional layers.

Our optimization procedure of choice has been \textit{optuna}. The performance metric is the validation loss averaged over three different parameter initializations. This averaging process is applied to counteract the statistical fluctuations introduced by the random initializations of the trainable network parameters. It is important because \textit{optuna} changes its search space dynamically, so early search results influence the probability distributions that serve as the basis to select later parameter values. This optimization process is done for different sized training sets individually, since on smaller training sets different architectures might perform better than on larger ones.

After the optimization procedure by \textit{optuna}, models of the best architectures are retrained ten times from scratch and evaluated on the validation set to verify their performance while further minimizing statistical fluctuations due to the random parameter initializations. Our results show that the same architectures that perform well on small training sets also perform well on larger training sets and that many architectures perform similarly. Thus, we select the best-performing architecture of each type according to the mean validation loss as a representative and compare only them. These best-performing architectures are shown in table~\ref{tab:best_performing_architectures}. We use \mbox{Conv($K \times K$, $N_\mathrm{in}$, $N_\mathrm{out}$)} to denote a two-dimensional convolution where $K$ is the kernel size and $N_\mathrm{in}$ ($N_\mathrm{out}$) is the number of input (output) channels. Before every convolutional operation we use circular padding to enforce periodic boundary conditions. Additionally, we use a stride of one for each convolution. Average pooling layers with kernel size $K$ and stride $s$ are written as \mbox{AvgPool($K \times K$, $s$)}. Dense layers are denoted by \mbox{Linear($N_\mathrm{in}$, $N_\mathrm{out}$)} with $N_\mathrm{in}$ ($N_\mathrm{out}$) input (output) nodes. 

\begin{table}[htbp]
\centering
\scriptsize
\caption{\label{tab:best_performing_architectures} 
Best architectures for fitting two observables $n$ and \mbox{$\lvert \phi \rvert^2$} for each type of architecture. This table shows feed-forward networks as found by our \textit{optuna} searches. The field configuration in the form of $(N_t, N_x, 4)$ tensors is fed into the network at the top. (The batch size is omitted here.) There are two output nodes for the two observables. The last row denotes the number of trainable parameters for each type.}
\begin{ruledtabular}
\begin{tabular}{lll}
\textbf{\EQ{}} & \textbf{\ST{}}  & \textbf{\FLAT{}} \\
 \hline
Conv($1 \times 1$, 4, 64)  & Conv($1 \times 1$,  4, 80)         & Conv($1 \times 1$, 4, 64)           \\
LeakyReLU                  & LeakyReLU                          & LeakyReLU                          \\
Conv($1 \times 1$, 64, 48) & Conv($1 \times 1$, 80, 80)         & Conv($2 \times 2$, 64, 80)         \\
LeakyReLU                  & LeakyReLU                          & LeakyReLU                          \\
Conv($1 \times 1$, 48, 80) & Conv($1 \times 1$, 80, 48)         & AvgPool($2 \times 2$, $2$)         \\
LeakyReLU                  & LeakyReLU                          & Conv($1 \times 1$, 80, 48)         \\
Conv($2 \times 2$, 80, 80) & AvgPool($2 \times 2$, $2$)         & LeakyReLU                          \\
LeakyReLU                  & Conv($2 \times 2$, 48, 80)         & Conv($2 \times 2$, 48, 64)         \\
GlobalAvgPool              & LeakyReLU                          & LeakyReLU                          \\
Linear(80, 2)              & GlobalAvgPool                      & AvgPool($2 \times 2$, $2$)         \\
                           & Linear(80, 2)                      & Conv($1 \times 1$, 64, 24)         \\
                           &                                    & Flatten                            \\
                           &                                    & Linear(360, 24)                    \\
                           &                                    & LeakyReLU                          \\
                           &                                    & Linear(24, 2)                      \\
\hline
33202                      & 26370                              & 47394                              \\
\end{tabular}
\end{ruledtabular}
\end{table}

\subsection{Training and testing}

The training is performed for every model of each of these three architectures and for each training set analogously: mean squared error (MSE) is selected as a loss function; the total loss is the arithmetic mean of the individual losses, each of which corresponds to one physical observable. It is optimized with the \textit{AMSGrad}~\cite{Reddi:2019} variant of the \textit{AdamW} optimizer~\cite{Loshchilov:2019} with a vanishing weight decay. Training models on different sized training sets gives us information about the sample efficiency. Limiting the size of training sets is motivated by machine learning tasks where the generation of training samples is costly, for example in medical applications or in large-scale simulations on supercomputers. The number of training samples in a training set ranges from~$100$ to~$20000$, with steps~$\Delta = 50$ from~$100$ to~$250$, $\Delta = 250$ from~$250$ to~$1000$, $\Delta = 500$ from~$1000$ to~$3000$ and $\Delta = 1000$ up to~$20000$ training samples. The corresponding validation sets contain~$10\%$ of the amount of the training set's data. The batch size during training was chosen to be~$100$ for training sets with at least~$500$ training samples and~$50$ otherwise. The reason behind this choice is that the algorithm shall be trained with mini-batches. To avoid that this approaches batch training for smaller training sets, a smaller batch size is chosen for them. The training lasts between~$100$ and~$1000$ epochs; the exact number is determined by early stopping based on validation loss with a \textit{patience} value of $25$. The model is taken at the time it has had the lowest validation loss. An overview of the chosen parameters is given in table~\ref{tab:Regression:parameters_for_training}.

\begin{table}[tbp]
    \centering
    \scriptsize
    \caption{Loss, optimizer and early stopping settings for \mbox{\textit{PyTorch}}.}
    \begin{ruledtabular}
    \begin{tabular}{llllll}
        loss & size\_avg & reduce & \multicolumn{3}{l}{reduction} \\
        MSELoss & None & None & \multicolumn{3}{l}{`mean'} \\
        \hline
        optimizer & lr & betas & eps & weight\_decay & amsgrad  \\
        AdamW & $0.001$ & $(0.9, 0.999)$ & $10^{-8}$ & 0 & True \\
        \hline
        \multicolumn{1}{l}{} & monitor & min\_delta & patience & \multicolumn{2}{l}{mode} \\
        EarlyStopping & `val\_loss' & $0$ & $25$ & \multicolumn{2}{l}{`min'}  \\
    \end{tabular}
    \end{ruledtabular}
    \label{tab:Regression:parameters_for_training}
\end{table}

The training takes place on a $60 \times 4$ lattice; the first number refers to the temporal dimension and the second to the spatial one. All data in the training set and the validation set have been generated with~$\mu = 1.05$.

Both translationally non-equivariant architectures (\ST{} and \FLAT{}) are trained with and without data augmentation. The training data are augmented by randomly shifting the input data by a number of pixels that is determined by the symmetry properties of the respective architecture. \ST{} architectures contain at most two spatial pooling layers with a \mbox{$2 \times 2$} kernel and a stride of~$2$, as is shown in table~\ref{tab:Regression:optuna_search_space_non-equiv_GP}. Therefore, they still incorporate translational equivariance under shifts of multiples of~$4$ (see section~\ref{sec:SymmetryProperties}); and the data can be augmented by shifts of~$[0,3]$ in both directions. \FLAT{} architectures, however, do not incorporate translational equivariance under any shifts of the input; thus, the data have to be augmented by shifts determined by the lattice size, i.e.~by~$[0,59]$ in the time direction and by~$[0,3]$ in the space direction.

The testing can be divided into two parts. As a first step, each architecture is evaluated on the same lattice size as it has been trained on, for various values of~$\mu$. This checks whether networks of a given architecture are able to generalize to values of~$\mu$ that are not represented in the training set. Then, the generalization ability to other lattice sizes is investigated. This second step can be done only with architecture types \EQ{} and \ST{} because \FLAT{} architectures require a fixed input size.

The test set on the~\mbox{$60 \times 4$} lattice contains samples that have been generated with various values of~$\mu$, most of which have not been used for the training and validation sets. This test set contains $4000$~lattice configurations pertaining to each~\mbox{$\mu \in [0.91, 1.05]$}, with steps of \mbox{$\Delta \mu = 0.005$}, where only the last value~$\mu = 1.05$ has been used for training and validation. This amounts to~\mbox{$1.16 \times 10^5$} testing samples in total.

For testing on different lattice sizes, we generated a test set analogous to the one on the~\mbox{$60 \times 4$} lattice on a~\mbox{$50 \times 2$}, a~\mbox{$100 \times 5$}, a~\mbox{$125 \times 8$} and a~\mbox{$200 \times 10$} lattice. For each of these lattice sizes, we created again~\mbox{$1.16 \times 10^5$} test samples, $4000$ pertaining to each~\mbox{$\mu \in [0.91, 1.05]$}, with steps of \mbox{$\Delta \mu = 0.005$}. Note that the winning \ST{} architecture (see table~\ref{tab:best_performing_architectures}) can be evaluated on the~$50 \times 2$ lattice because it contains only one spatial pooling layer with a \mbox{$2 \times 2$}~kernel and a stride~\mbox{$s = 2$}.
Further details on the dataset generation can be found in appendix~\ref{app:Datasets}.

\subsection{Results}

In this subsection, we will discuss the test results on the \mbox{$60 \times 4$}~lattice, which is the lattice size on which the training took place, in detail before analyzing the generalization ability to other lattice sizes of the different network types. Then, we will investigate the Silver Blaze phenomenon on the larger lattice sizes with our trained models. Finally, we will discuss the results on our second set of test sets, which contains data generated with a chemical potential greater than the one used to create the training set.

\subsubsection{Results on the same lattice size as training}

\begin{figure}[tbp]
	\centering
	\includegraphics{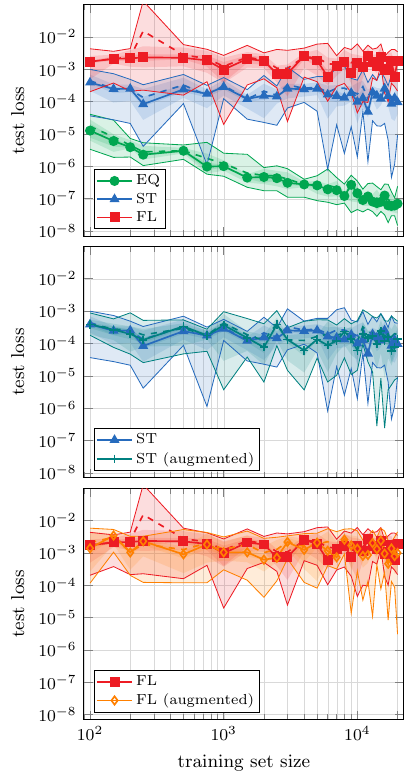}
	\caption{Test loss on the whole test set on the \mbox{$60 \times 4$}~lattice against the size of the training set (number of samples in the training set) on which the respective model has been trained. At the top, the results of the three architecture types (trained without data augmentation) are shown. In the middle and at the bottom, the effect of data augmentation during training of \ST{} and \FLAT{} models, respectively, is depicted. The plots display the best and worst loss (solid lines), the arithmetic mean of all ten random initializations for training (dashed lines) and the $20\%$ quantiles (shaded regions). The symbols visualize the positions of the measurements; the lines are there to guide the eye.}
	\label{fig:test_loss_60_times_4}
\end{figure}

The loss over the whole test set is a metric for how well the network performs. It is displayed in Fig.~\ref{fig:test_loss_60_times_4} for different training sets with a varying number of training samples. Essentially all of our models are trained until convergence, since we choose a very high number of maximum epochs and employ early stopping based on validation loss. Therefore, the comparison in Fig.~\ref{fig:test_loss_60_times_4} shows how the different architectures perform under limited information for smaller training set sizes. The plot at the top shows that the performance of the \EQ{} architecture improves with the size of the training set, as can be expected. The other two architectures do not seem to benefit from increasing the number of samples in the training set, which is quite surprising. Another remarkable result is that data augmentation does not seem to lead to an increase in performance either, as can be seen in the plot in the middle and at the bottom. At first sight, one may draw the conclusion that the \ST{} and the \FLAT{} architectures do not allow for approximations that are as precise as the one of the \EQ{} architecture. If the model has already converged to an optimal solution, adding more training samples,  irrespective of them being newly created or coming from data augmentation, will not improve its performance. However, the blue downward spikes in the loss of the \ST{} model show that some models succeed in finding a good approximation of the observables. Therefore, we draw the conclusion that although possible, it is more unlikely for the \ST{} and \FLAT{} models than for the \EQ{} models to learn a good approximation of~$n$ and~$\lvert \phi \rvert^2$.

\begin{figure}[htbp]
	\centering
	\includegraphics{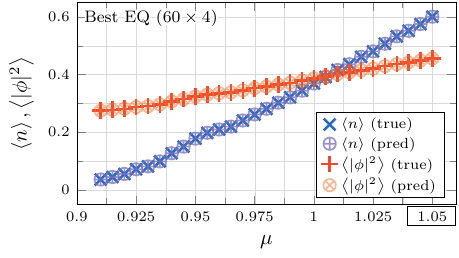}
	\caption{Predicted and true values for ensemble averages $\left< n \right>$ and $\left< \lvert \phi \rvert^2 \right>$ as a function of chemical potential $\mu$ on a $60 \times 4$ lattice. The predictions in this plot are made by the \EQ{} model with the smallest test loss. The model has been trained on data generated with $\mu = 1.05$ only but shows remarkable generalization capabilities to other values of $\mu$. In this and in subsequent plots, the training point is highlighted by a rectangle.}
	\label{fig:reg_observables_over_mu}
\end{figure}

The predictions of the individual observable's ensemble averages per $\mu$ made by the best \EQ{} model, according to the test loss, are displayed in Fig.~\ref{fig:reg_observables_over_mu}. It shows that the model, although trained only on samples generated with $\mu = 1.05$, can generalize to all other values of the chemical potential in the investigated interval. This seemingly astonishing generalization ability can be understood by recognizing that the network does not need to generalize from one $\mu$ to all others but from the training samples to other samples, each consisting of a lattice configuration and two observables. Even though the training set contains only lattice configurations that have been generated with~\mbox{$\mu = 1.05$}, the range of possible values for $n$ and $|\phi|^2$ is quite large and the chosen value of~$\mu$ in the training set already covers most of the observable values in the test set.

\begin{figure}[htbp]
	\centering
	\includegraphics{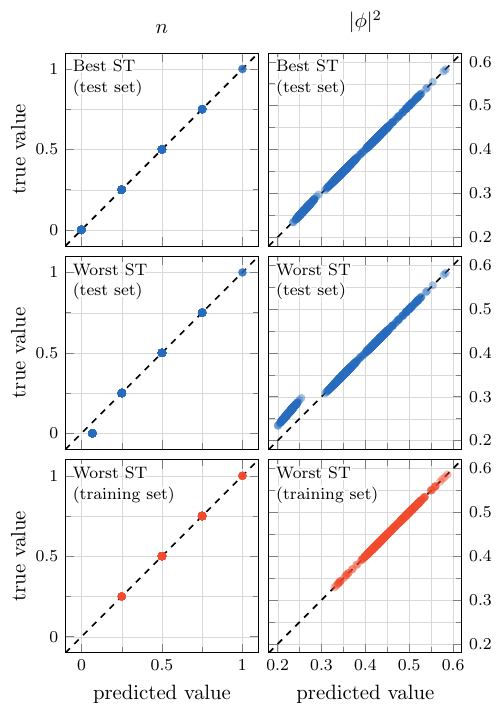}
	\caption{Predicted versus true observables for the best and the worst \ST{} network that have been trained on $18000$ samples. It shows that the \ST{} architecture's best instance is able to accurately estimate the whole ranges of observable values (top) and that its worst instance is failing to do so for smaller values of~$n$ and~$\lvert \phi \rvert^2$ (middle). The reason for this is that the training set includes only larger values of the observables (bottom) and that the worst model is not able to generalize beyond that. The top and the middle plot show $1\%$ of the test data; the bottom plot shows $4\%$ of the training data.}
	\label{fig:test_plots_mu_60_times_4}
\end{figure}

We exemplify this point using \ST{} models that have been trained on $18000$ samples: Figure~\ref{fig:test_plots_mu_60_times_4} shows the predicted versus the true values of both observables of the best (top) and the worst (middle) performing \ST{} model (according to the test loss) evaluated on the test set. The performance of the worst \ST{} model on the training data is shown at the bottom in Fig.~\ref{fig:test_plots_mu_60_times_4}. Note that in this scatter plot we do not average over the ensemble but show the predictions of the network for each individual example. Both networks are able to predict the larger values of both observables, but the worst one fails to predict the smallest values, since they are missing from the training set. The difference between the better and the worse \ST{} models is the ability to generalize to lattice configurations and ranges of observable values that is has not seen during training. The bad performance overall with some better performing outliers, which is shown in Fig.~\ref{fig:test_loss_60_times_4}, suggests that \ST{} networks succeed only sometimes with this generalization. \FLAT{} models show a similar behavior to \ST{} models, but the predictions are less precise throughout. A more detailed discussion of the input value distributions is given in appendix~\ref{app:Datasets}.

\subsubsection{Results on different lattice sizes} \label{sec:RegressionDifferentLatticeSizes}

One big disadvantage of \FLAT{} architectures impedes them from predicting on other lattice sizes than the one it was trained on: it requires a fixed input size. For this reason, we can compare only the performance of the \EQ{} and the \ST{} architecture. Since the results for the latter with and without data augmentation are very similar, we will show only the results without data augmentation. Also, we will fix the size of the training set for this comparison to $20000$~training samples.

\begin{figure}[htbp]
	\centering
	\includegraphics{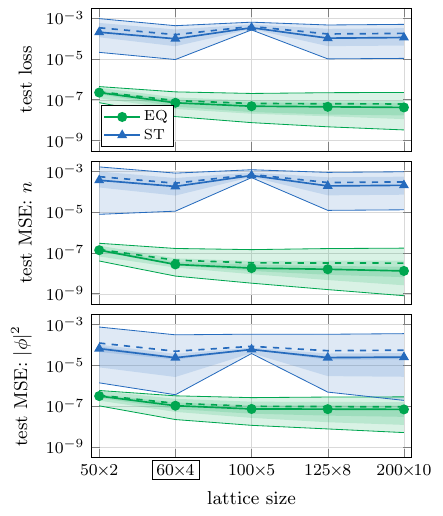}
	\caption{Overall test loss (top) and its two parts (middle and bottom) that come from each observable, on various lattice sizes. The training has taken place on the \mbox{$60\times 4$} lattice. Both architectures generalize well to lattice sizes different from the one they were trained on, but the \ST{} architecture (blue) performs visibly worse on the \mbox{$100 \times 5$}~lattice. The reason for this is the spatial pooling layer within the architecture, which drops $20 \%$ of the data, leading to a less accurate prediction for both observables.}
	\label{fig:test_loss_various_lattice_sizes}
\end{figure}

Figure~\ref{fig:test_loss_various_lattice_sizes} displays the overall test loss (top) and the individual losses of the observables (middle and bottom). Even though the \ST{} architecture keeps its worse performance from the \mbox{$60 \times 4$} lattice, the generalization ability to the different lattice sizes is comparable for the \EQ{} and the \ST{} architecture, with the exception of the \mbox{$100 \times 5$}~lattice for the latter. This kink in the blue curve shows up in the prediction of both observables, whereas this particular lattice size does not seem to be extraordinary to the \EQ{} architecture. The problem is the odd number in the lattice dimension. This behavior can be explained by a closer inspection of the \ST{} architecture: the first three convolutions leave the input size unchanged because they employ circular padding. Then, these \mbox{$100 \times 5$}~data are passed to a spatial pooling layer with a $2 \times 2$~kernel and a stride of~$2$. This layer disregards~$20 \%$ of the data and outputs data with a shape of~\mbox{$50 \times 2$}. Consequently, the \ST{} networks cannot use all of the data to come to a prediction, which is therefore less precise. This is far less severe on the \mbox{$125 \times 8$}~lattice, because there the spatial pooling layer disregards only $1 / 125$ of the data, which is not enough to be visible in Fig.~\ref{fig:test_loss_various_lattice_sizes}. A more detailed analysis of said kink in the blue curve can be found in appendix~\ref{app:PartiallyObscuredInput}.

\subsubsection{Silver Blaze phase transition}

The Silver Blaze~\cite{Cohen:2003} phenomenon refers to a second-order phase transition at vanishing temperature~$T$, where thermodynamical observables are independent of the chemical potential~$\mu$ below a critical value~$\mu_c$~\cite{Gattringer:2013b}. This means that the observables $\langle n \rangle$ and $\langle \lvert \phi \rvert^2 \rangle$ are constant for~\mbox{$\mu < \mu_c$}, whereas they start rising if the chemical potential surpasses its critical value. The particle density~$\langle n \rangle$ is an order parameter of the Silver Blaze phase transition. As a result of the finiteness of our lattices, the temperature is non-zero (\mbox{$T \propto 1 / N_t$}, where~$N_t$ is the number of lattice sites in the time direction), and thus the transition is not necessarily sharp. Because of the networks being trained to approximate the particle density and the lattice averaged squared absolute value of the field, this phase transition should also be visible in their predictions.

\begin{figure}[htbp]
	\centering
	\includegraphics{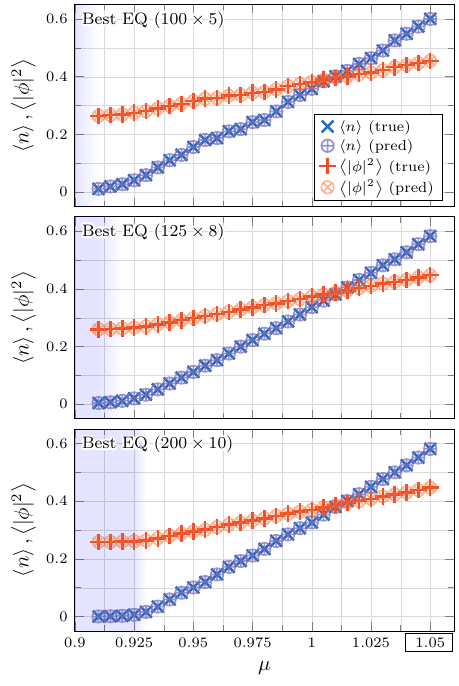}
	\caption{Predicted and true mean values of each observable for each individual~$\mu$ on the larger lattices. The predictions come from the \EQ{} model that has the lowest validation loss from all \EQ{} models that have been trained on $20000$ training samples. The training has been performed at $\mu = 1.05$. The kinks in the curves allow for an estimate of the Silver Blaze phase transition, which is indicated by the color gradient from the shaded region to the white background.} 
	\label{fig:silver_blaze_prediction_larger_lattice_sizes}
\end{figure}

Figure~\ref{fig:silver_blaze_prediction_larger_lattice_sizes} visualizes predictions of the \EQ{} architecture model that has been trained on $20000$~training samples and reached the lowest validation loss. More precisely, it shows the mean prediction of each observable for each individual~$\mu$, as well as the true mean value, on the~\mbox{$100 \times 5$} (top), the~\mbox{$125 \times 8$} (middle) and the \mbox{$200 \times 10$} (bottom) lattice. The largest lattices show both phases, whereas the smaller lattices show no phase transition in the range of~$\mu$ that we analyzed. This is because~$\mu_c$ decreases for increasing temperature.

The Silver Blaze phase transition is also predicted correctly by the \ST{} models that accurately generalize to the smaller values of the observables, e.g.~by the model that is shown at the top in Fig.~\ref{fig:test_plots_mu_60_times_4}, but not all \ST{} models generalize well.

\subsubsection{Extrapolation to larger chemical potentials}

After inspecting the remarkable results that the \EQ{} architecture and some models of the \ST{} architecture achieved on the interval~\mbox{$\mu \in [0.91, 1.05]$}, the question remains as to how the different architectures perform on data corresponding to chemical potentials greater than the one of the training set.\footnote{This was done thanks to a suggestion by the referee.} To answer it, we evaluate the already trained networks on additional test sets, without retraining them. We have created one test set for each lattice size under consideration. Each of them contains $4000$~lattice configurations corresponding to each~\mbox{$\mu \in [1.1, 1.5]$}, with steps of \mbox{$\Delta \mu = 0.1$}. This amounts to~\mbox{$2 \times 10^4$} test samples per lattice size.

\begin{figure}[tbp]
	\centering
	\includegraphics{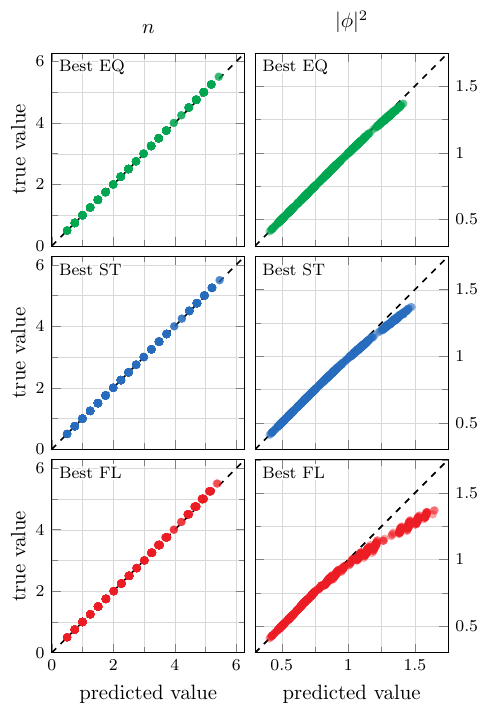}
	\caption{Predicted versus true observables for the best (according to the validation loss) model of each architecture evaluated on the test set generated from \mbox{$\mu \in [1.1, 1.5]$} on the \mbox{$60 \times 4$} lattice. Each model is able to predict higher values of~$n$ than given during training, but the generalization of~$\lvert \phi \rvert^2$ exhibits a clear difference between the generalization capabilities of the models. All these plots show $6.25\%$ of the test data.}
	\label{fig:reg_scatter_higher_mu}
\end{figure}

The predicted versus the true values of both observables on the \mbox{$60 \times 4$} lattice are shown in Fig.~\ref{fig:reg_scatter_higher_mu}. The individual rows correspond to the respective best model of each architecture, according to the validation loss. Although the extrapolation to higher $\lvert \phi \rvert^2$ seems to be more difficult than to higher~$n$, the predictions of the \EQ{} architecture's best model remain close to the identity line, and they are visually better than the predictions of the other two architectures' best models, the \FLAT{} model performing the worst. This leads to a visible deviation in the ensemble averages of the observables only for~\mbox{$\mu = 1.5$} and is comparable on all lattice sizes under consideration, with the exception of the \FLAT{} architecture, which allows only for predictions on the \mbox{$60 \times 4$} lattice without adapting the architecture and retraining. Note that ``best'' refers to the validation loss and that there are models of each architecture that extrapolate better than the respective ones depicted in Fig.~\ref{fig:reg_scatter_higher_mu}. However, since we are analyzing the generalization capabilities of the networks, we are restricted to metrics that take into account only the training and the validation data, and we chose the validation loss.

Figure~\ref{fig:reg_loss_over_mu} shows the total and individual test losses over~$\mu$ on the \mbox{$60 \times 4$} lattice. While the large difference between the different architectures in performance on chemical potentials smaller than \mbox{$\mu = 1.05$} is quite substantial, the performance on larger values of~$\mu$ differs by less. At \mbox{$\mu = 1.5$}, for example, the mean and median losses of the \EQ{} architecture are lower than their respective counterparts belonging to the other architectures, but there the \ST{} architecture's best model leads to the lowest. An analogous comparison between the \EQ{} and the \ST{} architecture on other lattice sizes leads to similar results, with the exception of the \mbox{$100 \times 5$} lattice, on which the latter fails. Note that Fig.~\ref{fig:reg_scatter_higher_mu} depicts the model with the lowest validation loss pertaining to each individual architecture. For the \EQ{} architecture, it is a model of the ensemble that has been trained on $20000$ training samples, whereas for the \ST{} and the \FLAT{} architecture, it is a model that has been trained on $18000$ training samples. Figure~\ref{fig:reg_loss_over_mu}, however, shows the ensemble of models trained on $20000$ training examples for each individual architecture.

\begin{figure}[tbp]
	\centering
	\includegraphics{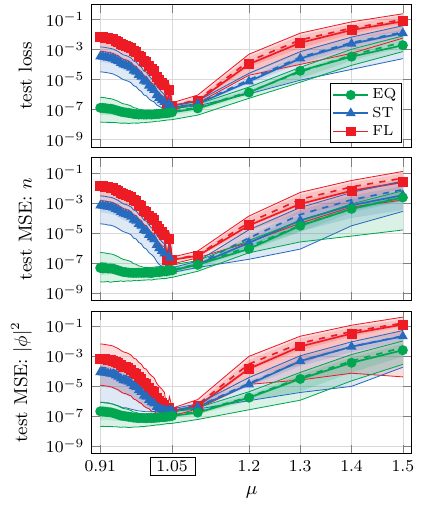}
	\caption{Total test loss and its parts corresponding to the individual observables~$n$ and $\lvert \phi \rvert^2$ over the chemical potential on the  \mbox{$60 \times 4$} lattice. It displays the ensemble of models that have been trained on $20000$ training samples corresponding to each architecture. The large difference in the quality of the predictions for \mbox{$\mu \le 1.05$} is also visible in Fig.~\ref{fig:test_loss_60_times_4}. For \mbox{$\mu > 1.05$}, the performance is more similar, although for \mbox{$\mu = 1.5$}, the mean value of the total test loss of the \ST{} models still differs from the mean values of the other architecture's models by roughly one order of magnitude.}
	\label{fig:reg_loss_over_mu}
\end{figure}

\subsubsection{Results summary}

In summary, the best translationally equivariant architecture performs better than the respective best model of the other two types on the lattice size that they have been trained on. Only some \ST{} networks are able to generalize beyond values of observables that they have encountered during training, while \EQ{} networks show no such problem. The \FLAT{} architecture shows similar behavior to the \ST{} architecture, but its predictions are less precise overall. Models of \FLAT{} architectures cannot be applied to different lattice sizes. The \EQ{} and the \ST{} architecture are both capable of generalizing to different lattice sizes, although the latter retains the higher average test loss from the \mbox{$60 \times 4$}~lattice due to the bad generalization to observable values that were not in the training set. Furthermore, \ST{} architectures are not suited to make predictions on every arbitrary lattice size. Each lattice dimension has to regard the behavior of the spatial pooling layers in the network in order to use all the data for the prediction. \EQ{} architectures have the advantage to impose no such restriction. Even though all the models have been trained only on~\mbox{$\mu = 1.05$} on the \mbox{$60 \times 4$}~lattice, many of them are able to predict the Silver Blaze phase transition on a different lattice size, where~\mbox{$\mu_c \ll 1.05$}. The \EQ{} architectures do this especially well. We found that data augmentation does not help in the training of \ST{} and \FLAT{} architectures, which is why we refrain from using it in the next two tasks.

Lastly, we want to make a comparison to the results of~\cite{Zhou:2019} where the same regression task was performed. Our best model needs much fewer trainable parameters than the one in~\cite{Zhou:2019}, i.e.~approximately \mbox{$3 \times 10^4$} compared to over \mbox{$10^7$} as extracted from their network architecture. We also found well-performing models that contain by an order of magnitude fewer parameters than our best one. Furthermore, their network architecture would fall in our \FLAT{} category, which means that it can be employed on only one specific lattice size.

\section{Classification: detecting flux violations} \label{sec:Classification}

\begin{figure*}
\subfigure[~Example field configuration]{\includegraphics{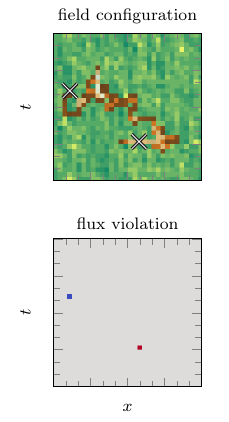}}
\subfigure[~Feature maps of convolutional network in best \EQ{} and \ST{} models]{\includegraphics{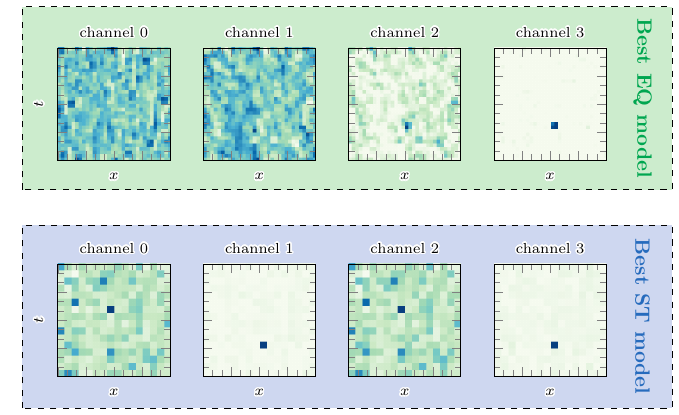}}
	\caption{Visualization of an open worm field configuration and of the best models' predictions. (a) An example field configuration including an open worm (highlighted in brown) and the resulting flux violation given by Eq.~\eqref{eq:flux_conservation}. The point-like violations occur at the two open ends of the worm (shown as crosses). (b) Feature maps of the convolutional part of the best \EQ{} (top, green) and \ST{} (bottom, blue) model showing the first four channels (of 32 and 16, respectively). Because of overparameterization, only some of the channels detect the violation (e.g.~channels 2 and 3 for \EQ{} and 1 and 3 for \ST{}), while other channels (e.g.~0 and 1 for \EQ{} and 0 and 2 for \ST{}) do not produce easily interpretable output.}
	\label{fig:openworm_schematic}
\end{figure*}

In the previous section, we have found that rather simple CNN models can approximate the functions $n$ and $|\phi|^2$ sufficiently well. In fact, the function $n$ can be exactly represented by a linear, equivariant model with a single $1 \times 1$ convolution. Similarly, while $|\phi|^2$ does not admit an exact representation in terms of $1 \times 1$ convolutions, it is easy to see that the lattice averaged quantity $\sum_x |\phi_x|^2 / (N_t N_x)$ can be written as a sum over a function that receives dominant contributions from  $k_{x,\mu}$ and $l_{x,\mu}$ at the same lattice site $x$.

In order to study models that require larger kernel sizes, we need to shift our focus to quantities that cannot be computed by taking into account only field values at a single lattice site. One example for such a quantity is the local flux violation
\begin{equation}
\mathcal{F}_x \equiv \sum_{\nu=1}^D \left( k_{x,\nu} - k_{x-\hat{\nu}, \nu} \right) \in \mathbb{Z}.
\label{eq:flux_conservation}
\end{equation}
Evaluated at some lattice site $x$, it specifically requires information from nearest neighbors surrounding $x$.

We therefore propose to solve the following classification task: an arbitrary field configuration $X = \{ k_{x,\mu}, l_{x, \nu} \}$ is mapped to the label $y(X)$:
\begin{align}
y(X) = \left\{
\begin{array}{ll}
0 &\quad \mathcal{F}_x = 0, \quad \forall x, \\
1 &\quad \rm{else}. \\
\end{array}
\right.
\end{align}
Since the worm algorithm generates only physical field configurations which by design satisfy the flux constraint $\mathcal{F}_x = 0$, $\forall x$, we adapt it to generate configurations including open worms. The field configurations generated this way exhibit flux violations at each end of the open worm (see Fig.~\ref{fig:openworm_schematic}). While we will be using such open worm configurations only for the purpose of classification and regression tasks, they are typically utilized in the calculation of $n$-point functions of $\phi$ \cite{Gattringer:2013a, Rindlisbacher:2016}.

For this task (and the following counting task in section~\ref{sec:MultipleWorms}), we have generated field configurations on square lattices of various sizes given by $(N_t \times N_x) \in \{{8 \times 8}, { 16 \times 16 }, {32\times 32}, {64\times 64} \}$. The value of the coupling constant is fixed to $\lambda = 1$, the mass $m$ takes values given by $\eta = 4 + m^2  \in \{ 4.01, 4.04, 4.25\}$, and possible values of the chemical potential $\mu$ are given by $\mu \in  \{ 1, 1.25, 1.5\}$. Training is performed only on the smallest lattice size ($8 \times 8$) and two specific choices for the pair $(\eta, \mu)$: $(\eta_1, \mu_1) = (4.25, 1)$ and $(\eta_2, \mu_2) = (4.01, 1.5)$. We use a fixed number of training examples, $N_\mathrm{train} = 4000$, distributed equally between the two classes: on half of the generated field configurations, we generate an open worm on top of a flux-constraining configuration. Other combinations of parameters and lattice sizes are used only during testing. Further details regarding the datasets can be found in appendix~\ref{app:Datasets}.

\subsection{Architecture search, training and testing}

We aim to make a comparison between the three different architecture types that have been presented in Fig.~\ref{fig:architectures}. As discussed previously, both \EQ{} and \ST{} architectures can be applied to field configurations of varying lattice size, while \FLAT{} architectures are compatible only with a fixed lattice size. As we are dealing with a binary classification problem, a sigmoid activation function is applied to the output of our models. 

To facilitate a fair comparison among architecture types, we use \textit{optuna} to perform a search for well-performing architectures using validation loss (binary cross entropy loss) as the metric to optimize for. In all three cases, we allow for up to $N_\mathrm{conv, max} = 3$ convolutional layers with circular padding and a maximum kernel size of $K = 3$ and $N_\mathrm{ch} \in \{ 4, 8, 16, 32 \}$ possible channels. Every convolution is followed by applying a \textit{LeakyReLU} activation function. In addition, after every convolution except the last we allow for the insertion of a pooling layer (either average or max pooling) with stride $s = 1$ in the case of \EQ{} networks and $s = 2$ in the case of \ST{} and \FLAT{} networks. For non-equivariant architectures we require at least one pooling layer with $s=2$ to break translational equivariance. Following this convolutional part of the network, we either apply a global max pooling layer (\EQ{} and \ST{}) or flatten the remaining lattice structure (\FLAT{}). Although other global pooling layers are possible (e.g.~average pooling or sum pooling), global max pooling seems to be the most fitting choice when the task is to detect point-like defects in the field configuration. We note that, as an additional search parameter, we allow for explicitly setting bias terms to zero in every convolutional layer.  The resulting feature map is then fed to a dense network with up to $N_\mathrm{dense, max}  = 2$ layers with $N_\mathrm{nodes} \in \{ 4, 8, 16, 32 \}$ nodes. Every linear layer is followed by the application of \textit{LeakyReLU}. A final linear layer is used to map the activation values to a single output node, which is followed by a sigmoid activation function. As before, we use binary search parameters for setting bias terms to zero in each linear layer.

For each architecture type, we perform two \textit{optuna} search runs with 400 trials each. Each model candidate (i.e.~a set of hyperparameters) is trained five times with randomly initialized weights to reduce random fluctuations from the stochastic optimization algorithm. Among the two searches, the best-performing architecture (according to validation loss) is chosen and retrained 50 times to build an ensemble of models for each architecture type. 

Training proceeds similar to the regression task of section~\ref{sec:Regression}. We use the \textit{AMSGrad} variant of the \textit{AdamW} optimizer without weight decay, a learning rate of $\lambda_{lr} = 10^{-3}$, a batch size of 100 and 200 epochs. We employ early stopping based on validation loss with a \textit{patience} value of 50. The validation set consists of 2000 examples from the same distribution as the training set. 

\begin{table}
\centering
\scriptsize
\caption{\label{tab:class_archs} Best architectures for detecting flux violations. This table shows feed-forward architectures as found by our \textit{optuna} searches. Input in the form of $(N_t, N_x, 4)$ tensors is fed into the network at the top. The output of each network is a classification probability. The last row denotes the number of trainable parameters for each type. We use an asterisk (*)  to denote layers where the bias is explicitly set to zero.}
\begin{ruledtabular}
\begin{tabular}{l l  l}
\textbf{\EQ{}} & \textbf{\ST{}}  & \textbf{\FLAT{}} \\
 \hline
Conv($2 \times 2$, 4, 32)  & Conv${}^*$($2 \times 2$, 4, 16)    & Conv${}^*$($3 \times 3$, 4, 8)    \\
LeakyReLU                  & LeakyReLU                          & LeakyReLU                         \\
Conv($1 \times 1$, 32, 32) & MaxPool($2 \times 2$, $2$)         & MaxPool($2 \times 2$, $2$)        \\
LeakyReLU                  & Conv($1 \times 1$, 16, 16)         & Conv($2 \times 2$, 8, 32)         \\
GlobalMaxPool              & LeakyReLU                          & LeakyReLU                         \\
Linear(32, 32)             & Conv($1 \times 1$, 16, 8)          & AvgPool($2 \times 2$, $2$)        \\
LeakyReLU                  & LeakyReLU                          & Conv($2 \times 2$, 32, 32)        \\
Linear${}^*$(32, 1)        & GlobalMaxPool                      & LeakyReLU                         \\
Sigmoid                    & Linear${}^*$(8, 32)                & Flatten                           \\
                           & Linear(32, 1)                      & Linear${}^*$(128, 1)              \\
                           & Sigmoid                            & Sigmoid                           \\
\hline
2657                       & 953                                & 5600                              \\
\end{tabular}
\end{ruledtabular}
\end{table}

The best architectures found during the \textit{optuna} search for each type are shown in table~\ref{tab:class_archs}.

\subsection{Results}

\begin{figure}
    \centering
    \includegraphics{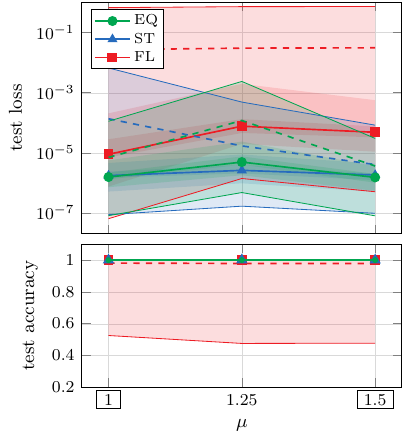}
    \caption{Top: test loss for best equivariant (\EQ{}, green), non-equivariant strided (\ST{}, blue) and non-equivariant flattening (\FLAT{}, red) classification architectures as a function of the chemical potential $\mu$ on $8 \times 8$ lattices. Training was performed on data with $\mu=1$ and $1.5$ only. Bottom: test accuracy as a function of $\mu$. The colored bands show the ensemble uncertainty from all 50 randomly initialized models with the thick line indicating the median loss (accuracy) and the dashed line showing the mean loss (accuracy). Both \EQ{} and \ST{} architectures outperform the \FLAT{} architecture.}
    \label{fig:class_res_8x8}
\end{figure}

\begin{figure}
    \centering
    \includegraphics{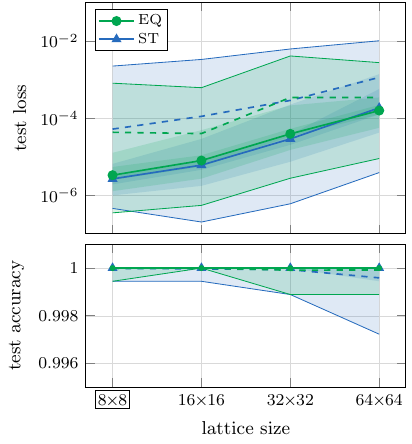}
    \caption{Top: test loss for best equivariant (\EQ{}, green) and non-equivariant strided (\ST{}, blue) classification architectures as a function of  lattice size. Bottom: test accuracy as a function of lattice size. The networks have been trained on the $8 \times 8$ lattice only. We observe that both types of architecture lead to good generalization across lattice sizes with slightly less variation in the performance of the \EQ{} architecture.}
    \label{fig:class_res}
\end{figure}

Our main results are presented in Figs.~\ref{fig:class_res_8x8} and \ref{fig:class_res}. 
Figure \ref{fig:class_res_8x8} shows a comparison of all three architecture types evaluated on $8 \times 8$ lattices as a function of $\mu$. Both \EQ{} and \ST{} exhibit very good classification accuracy and test loss, while our ensemble of  \FLAT{} models contains a few outliers which increase the average test loss. Figure \ref{fig:class_res}, which shows an average (loss and accuracy) of all available test datasets, demonstrates that both \EQ{} and \ST{} architectures generalize well on larger lattices. \FLAT{} architectures are not included, since they can be used for only one specific lattice size (in this case, $8 \times 8$). It is evident that \FLAT{} models perform worse on average compared to \EQ{} and \ST{} networks, but, in contrast to the previous regression task, the non-equivariant architecture without flattening (\ST{}) exhibits similar performance to the equivariant architecture (\EQ{}). The loss of spatial information due to pooling operations with stride $s > 1$ does not seem to affect the ability of \ST{} models to correctly classify flux violations.

In light of the results in Figs.~\ref{fig:class_res_8x8} and \ref{fig:class_res}, the question arises how \EQ{} and \ST{} models are able to make predictions with such high accuracy and if the computation that is performed by the networks can be easily understood and interpreted. To answer this, we ``dissect'' fully trained \EQ{} and \ST{} models by examining the feature maps that are generated by the convolutional part of the network, i.e.~we modify models by removing the global pooling operation and the dense network. Examples of these feature maps are shown in Fig.~\ref{fig:openworm_schematic}~(b). We see that some of the channels of the output of the convolutional part of the network highlight flux violations in the vicinity of one of the open ends of the worm (see Fig.~\ref{fig:openworm_schematic}~(a)). At first, it seems surprising that only one of the two open ends is detected. However, one has to keep in mind that the models were not directly trained on the local flux violation as in Eq.~\eqref{eq:flux_conservation} but instead were only given global information about whether or not a field configuration contains a violation. Detecting a single defect is sufficient to make the correct prediction. 

It is further noteworthy that, compared to typical deep learning models, the models found in our architecture search are rather small, with $\sim 2700$ parameters in the case of \EQ{} models and $\sim 1000$ parameters for our best \ST{} architecture on this task.

\section{Regression: counting flux violations}
\label{sec:MultipleWorms}

A natural extension of last section is the study of lattice configurations with more than one open worm, meaning a regression task where the inputs are lattice configurations that are labeled by the number of open worms $N_\mathrm{worms}$ that they contain. The addition of an open worm implies the emergence of a flux violation at its head and tail, meaning that the quantity defined in Eq.~\eqref{eq:flux_conservation} respects $\mathcal{F}_x=\pm1$ at an open worm endpoint. As discussed in more detail in Appendix~\ref{app:Datasets}, we explicitly forbid endpoints of different worms to lie on top of each other; therefore, a configuration with $N_\mathrm{worms}$ open worms is characterized by $2N_\mathrm{worms}$ points where $\mathcal{F}_x=\pm1$. With this clarification, the task we are going to tackle in this section can be formally expressed as the approximation of the function
\begin{equation}
    y(X)=\displaystyle\frac{1}{2}\sum_x\left|\mathcal{F}_x\right|,
\end{equation}
where $X$ is a lattice configuration $\{k_\mu,l_\nu\}$. We note that this task resembles a simplified version of other counting problems, such as crowd counting \cite{Gao:2020}.

The physical parameters are the ones mentioned in the previous section, with the addition of a number of open worms ranging from 0 to 10, yielding a total of $36\times11=396$ combinations of parameters. While the test set includes data coming from all these combinations, the training set consists of data created at only a small subset of such combinations to inspect the generalization capabilities of the architecture under consideration. We use a training set with $N_\mathrm{train}=20000$ samples distributed equally between two different numbers of open worms $N_\mathrm{worms}\in\{0,5\}$ and physical parameters $(\eta,\mu)\in\{(4.01,1.5),(4.25,1)\}$. The validation set contains $N_\mathrm{val}=2000$ samples. For more details regarding the datasets, see appendix~\ref{app:Datasets}. 

\subsection{Architecture search, training and testing}

A preliminary phase is carried out in order to explore trends with different hyperparameter choices. We also empirically confirm the relationship between the prediction of an extensive quantity and the necessity of a global sum after the convolutional part of the neural network, as discussed in section~\ref{sec:translational_symmetry}. The information gathered in this initial stage is then used to determine the architecture search space for \textit{optuna}. As in the two previous tasks, this is done for the three architecture types shown in Fig.~\ref{fig:architectures}. The search spaces are designed to be as similar as possible to eliminate favorable conditions for any of the three architecture types.

The \EQ{} architecture search space is characterized by $N_\mathrm{conv}\in\{2,3,4\}$ convolutional layers with a kernel size $K\in\{1,2,3\}$, followed by a global sum pooling layer which leads to a dense network, composed of $N_\mathrm{dense}\in\{0,1,2\}$ layers. The \ST{} architecture search space is structured in the same way with the additional insertion of $N_\mathrm{pool}\in\{1,2\}$ spatial pooling layers with stride \mbox{$s=2$}. Since training is conducted on $8\times8$ lattices, three such pooling layers would reduce the lattice to only one site and render global sum pooling ineffective, which is why we limit the choice of $N_{\mathrm{pool}}$. The \FLAT{} architecture search space features two mandatory convolutions with a $2\times2$ or a $3\times3$ kernel, each followed by a spatial pooling layer. A $1\times1$ convolution can be inserted before and after each mandatory convolution, leading to a total number of convolutions $N_\mathrm{conv}^{'}\in\{2,3,4,5,6\}$. This part is followed by the flattening layer and a dense network consisting of $N_\mathrm{dense}^{'}\in\{1,2,3\}$ layers, where the maximum number of layers is increased with respect to the other two architecture types to compensate for the possible absence of $1\times1$ convolutions. All three types share the following features: circular padding is used in every convolution; the channels in the convolutions and the nodes in the dense layers are selected from the set $N_\mathrm{ch/nodes}\in\{4,8,16,32\}$; a \textit{LeakyReLU} activation function is used after every convolution and every linear layer not leading to the output; the bias in both the convolutions and the linear layers is turned off. We also mention that an independent search is run also for \EQ{} architectures with the optional inclusion of spatial pooling layers with stride $s=1$, in the same fashion described for \ST{} models. However, none of the \EQ{} models found in this run are better than the \EQ{} models found in the previous search.

\begin{table}
\centering
\scriptsize
\caption{\label{tab:reg2_archs} Best architectures for counting flux violations. This table lists the feed-forward architectures resulting from the \textit{optuna} searches sorted by their average validation loss over 20 instances trained from scratch. Four channels of size $N_t\times N_x$ are the input tensors passed at the top of each network, which yields a scalar output representing the predicted number of open worms. The last row shows the number of trainable parameters for each architecture.}
\begin{ruledtabular}
\begin{tabular}{l l l}
\textbf{1st \EQ{}} & \textbf{2nd \EQ{}} & \textbf{3rd \EQ{}} \\
 \hline
Conv($1 \times 1$, 4, 32) & Conv($2 \times 2$, 4, 8) & Conv($1 \times 1$, 4, 4) \\
LeakyReLU & LeakyReLU & LeakyReLU \\
Conv($2 \times 2$, 32, 8) & Conv($2 \times 2$, 8, 8) & Conv($2 \times 2$, 4, 8) \\
LeakyReLU & LeakyReLU & LeakyReLU \\
Conv($2 \times 2$, 8, 16) & Conv($1 \times 1$, 8, 4) & Conv($2 \times 2$, 8, 4) \\
LeakyReLU & LeakyReLU & LeakyReLU \\
Conv($1 \times 1$, 16, 8) & Conv($1 \times 1$, 4, 8) & Conv($3 \times 3$, 4, 1) \\
LeakyReLU & LeakyReLU & LeakyReLU \\
GlobalSumPool & GlobalSumPool & GlobalSumPool \\
Linear(8, 1) & Linear(8, 1) & \\
\hline
1800 & 456 & 308 \\
\end{tabular}
\medskip

\begin{tabular}{l l l}
\textbf{1st \ST{}} & \textbf{2nd \ST{}} & \textbf{3rd \ST{}} \\
 \hline
Conv($2 \times 2$, 4, 16) & Conv($2 \times 2$, 4, 4) & Conv($2 \times 2$, 4, 4) \\
LeakyReLU & LeakyReLU & LeakyReLU \\
Conv($1 \times 1$, 16, 32) & MaxPool($2 \times 2$, $2$) & AvgPool($2 \times 2$, $2$) \\
LeakyReLU & Conv($2 \times 2$, 4, 4) & Conv($3 \times 3$, 4, 16) \\
Conv($1 \times 1$, 32, 32) & LeakyReLU & LeakyReLU \\
LeakyReLU & GlobalSumPool & GlobalSumPool \\
AvgPool($2 \times 2$, $2$) & Linear(4, 1) & Linear(16, 32) \\
Conv($1 \times 1$, 32, 8) & & LeakyReLU \\
LeakyReLU & & Linear(32, 1) \\
GlobalSumPool & & \\
Linear(8, 32) & & \\
LeakyReLU & & \\
Linear(32, 1) & & \\
\hline
2336 & 132 & 1184 \\
\end{tabular}
\medskip

\centering
\scriptsize
\begin{tabular}{l l l}
\textbf{1st \FLAT{}} & \textbf{2nd \FLAT{}} & \textbf{3rd \FLAT{}} \\
 \hline
Conv($2 \times 2$, 4, 4) & Conv($2 \times 2$, 4, 8) & Conv($2 \times 2$, 4, 32) \\
LeakyReLU & LeakyReLU & LeakyReLU \\
AvgPool($2 \times 2$, $2$) & AvgPool($2 \times 2$, $2$) & AvgPool($2 \times 2$, $2$) \\
Conv($3 \times 3$, 4, 8) & Conv($3 \times 3$, 8, 4) & Conv($3 \times 3$, 32, 4) \\
LeakyReLU & LeakyReLU & LeakyReLU \\
AvgPool($2 \times 2$, $2$) & AvgPool($2 \times 2$, $2$) & AvgPool($2 \times 2$, $2$)
\\
Flattening & Flattening & Flattening 
\\
Linear(8, 4) & Linear(4, 4) & Linear(4, 32)
\\
LeakyReLU & LeakyReLU & LeakyReLU \\
Linear(4, 32) & Linear(4, 32) & Linear(32, 16) \\
LeakyReLU & LeakyReLU & LeakyReLU \\
Linear(32, 1) & Linear(32, 1) & Linear(16, 1) \\
\hline
640 & 640 & 2704 \\
\end{tabular}
\end{ruledtabular}
\end{table}

As in the previous section, two metrics are employed for performance analysis: the MSE loss and the accuracy, for which predictions are rounded to the closest integer. The quantity monitored during the optimization phase is validation loss. Since the hyperparameter search spaces are large, two \textit{optuna} runs are executed to reduce the risk of overlooking promising regions. For each hyperparameter selection, three models are trained, in order to attenuate initialization influences, for 200 epochs with no early stopping. The other hyperparameters are defined prior to the optimization: we adopt a batch size of 16, a learning rate $\lambda_{lr}=10^{-3}$ and the \textit{AMSGrad} variant of the \textit{AdamW} optimizer with zero weight decay.

Out of 100 different architectures from the two \textit{optuna} searches, the best three for each type are selected according to the validation loss averaged over their three initializations. These architectures become the starting point of the next step: training the most promising architectures from scratch.

We keep all the same hyperparameters, except for the number of epochs which is increased to 500, and the same training and validation sets. For a fair comparison 20 instances of the same architectures are trained to mitigate the influence of random initializations, and for each of them the best model is saved. We sort the architectures according to the average over the 20 models of the validation loss. Table~\ref{tab:reg2_archs} portrays the details of the feed-forward networks.

\subsection{Results}

\begin{figure}
    \centering
    \includegraphics{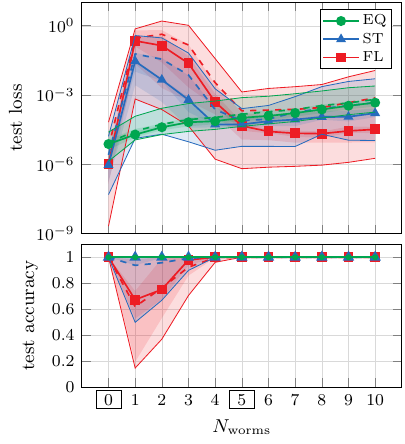}
    \caption{Test loss (top) and test accuracy (bottom) of the best architectures according to the mean of the validation loss tested on all $8\times8$ lattices
    as functions of the number of open worms. Training and validation are carried out at $N_\mathrm{worms} = 0$ and $N_\mathrm{worms} = 5$, while test results are shown for $N_\mathrm{worms} \in [0, 10]$.}
    \label{fig:reg2_val_mean_res_8x8}
\end{figure}

\begin{figure}
    \centering
    \includegraphics{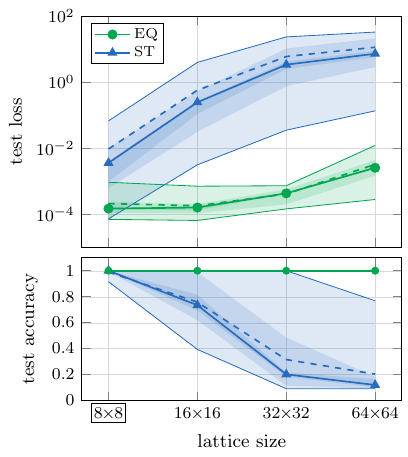}
    \caption{Test loss (top) and test accuracy (bottom) of the best architectures according to the mean of the validation loss tested as functions of the lattice size. Training and validation are carried out on the smallest lattice ($8 \times 8$), while testing is performed on all lattice sizes.}
    \label{fig:reg2_val_mean_res}
\end{figure}

Since \FLAT{} models cannot be evaluated on input sizes different from the ones they have been trained on, we make two kinds of comparison between architectures: one involves all three types tested only on $8\times8$ lattices, and the other focuses on \EQ{} and \ST{} tested on all lattice sizes available. The first analysis is featured in Fig.~\ref{fig:reg2_val_mean_res_8x8} and the second in Fig.~\ref{fig:reg2_val_mean_res}, where the dashed lines indicate the mean values and the markers represent the medians. 

A common takeaway of these plots is that for this task equivariance proves to be an important property to incorporate into the network. Interestingly, Fig.~\ref{fig:reg2_val_mean_res_8x8} suggests that \ST{} and even more so \FLAT{} have difficulties in recognizing certain numbers of open worms, with the lowest performance at $N_\mathrm{worms}=1$, which is compatible with the fact that the training set consists only of $N_\mathrm{worms}\in\{0,5\}$. 

\begin{table*}
\centering
\scriptsize
\caption{\label{tab:reg2_metrics} Metrics of the best architectures for counting flux violations. Highlighted in bold are the results of the best architectures for each type according to the corresponding metric.}
\begin{ruledtabular}
\begin{tabular}{l c c c c c c}
 & \multicolumn{2}{c}{\textbf{validation loss on $8\times8$}} & \multicolumn{2}{c}{\textbf{test loss on $8\times8$}} & \multicolumn{2}{c}{\textbf{test loss up to $64\times64$}} \\
& \textbf{mean} & \textbf{median} & \textbf{mean} & \textbf{median} & \textbf{mean} & \textbf{median} \\
\hline \\[-2ex]
\textbf{1st \EQ{}} & $\pmb{\num{4.676e-05}}$ & $\num{4.137e-05}$ & $\pmb{\num{2.108e-04}}$ & $\num{1.483e-04}$ & $\pmb{\num{1.008e-03}}$ & $\num{8.308e-04}$ \\
\textbf{2nd \EQ{}} & $\num{1.042e-04}$ & $\pmb{\num{2.440e-05}}$ & $\num{3.525e-04}$ & $\pmb{\num{8.783e-05}}$ & $\num{1.807e-03}$ & $\pmb{\num{7.936e-04}}$ \\
\textbf{3rd \EQ{}} & $\num{8.992e-03}$ & $\num{3.072e-04}$ & $\num{2.105e-02}$ & $\num{9.163e-04}$ & $\num{1.925e+00}$ & $\num{4.031e-02}$ \\
\hline \\[-2ex]
\textbf{1st \ST{}} & $\pmb{\num{2.331e-05}}$ & $\num{2.173e-05}$ & $\num{9.438e-03}$ & $\num{3.576e-03}$ & $\num{4.446e+00}$ & $\num{3.026e+00}$ \\
\textbf{2nd \ST{}} & $\num{8.479e-05}$ & $\num{4.372e-05}$ & $\pmb{\num{2.545e-04}}$ & $\pmb{\num{9.340e-05}}$ & $\pmb{\num{3.738e-03}}$ & $\pmb{\num{1.171e-03}}$ \\
\textbf{3rd \ST{}} & $\num{2.869e-04}$ & $\pmb{\num{2.171e-05}}$ & $\num{1.676e-02}$ & $\num{1.381e-03}$ & $\num{2.943e+00}$ & $\num{9.580e-01}$ \\
\hline \\[-2ex]
\textbf{1st \FLAT{}} & $\pmb{\num{2.602e-05}}$ & $\num{1.787e-05}$ & $\num{7.837e-02}$ & $\num{3.817e-02}$ & - & - \\
\textbf{2nd \FLAT{}} & $\num{4.004e-05}$ & $\num{1.117e-05}$ & $\pmb{\num{5.300e-02}}$ & $\pmb{\num{1.285e-03}}$ & - & - \\
\textbf{3rd \FLAT{}} & $\num{5.805e-05}$ & $\pmb{\num{1.031e-05}}$ & $\num{6.382e-02}$ & $\num{3.556e-02}$ & - & - \\
\end{tabular}
\end{ruledtabular}
\end{table*}

We observe that the podium ordering depends on the metric chosen; for example, in table~\ref{tab:reg2_metrics}, the mean and the median of the validation loss lead to different winners for all architecture types. As the training and validation sets are characterized by the same physical parameters, which represent a very small subset of the whole set of parameters used for testing, metrics on the validation set may not be indicative of the generalization capabilities of the network, so we investigate the same metrics on the test set in the two manners described at the beginning of this subsection.

\begin{figure}
    \centering
    \includegraphics{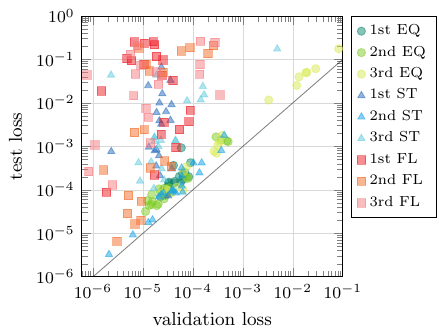}
    \caption{Test loss on $8\times8$ lattices versus validation loss of every instance for each architecture. This scatter plot shows 20 models obtained during retraining for the three winning architectures of each type (\EQ{}, \ST{} and \FLAT{}). The diagonal black line indicates where validation loss equals test loss. Networks have been trained and validated for $N_\mathrm{worms}\in\{0,5\}$ and $(\eta,\mu)\in\{(4.01,1.5),(4.25,1)\}$ on an $8 \times 8$ lattice. Generalization (test loss) is checked with zero to ten open worms, $\mu \in  \{1.0, 1.25, 1.5 \}$, $\eta \in \{ 4.01, 4.04, 4.25\}$ and a fixed lattice size of $8 \times 8$. The closer a particular point lies to the black line, the better it generalizes. This appears to be generally the case for \EQ{} architectures (green circles).}
    \label{fig:reg2_tloss_vloss_8x8}
\end{figure}

\begin{figure}
    \centering
    \includegraphics{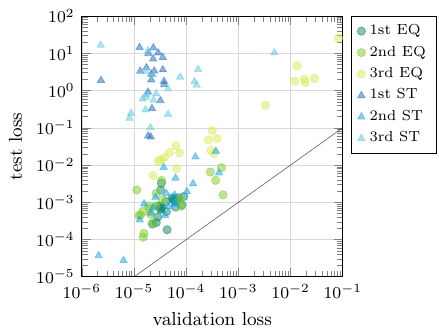}
    \caption{Scatter plot of test loss on all lattice sizes versus validation loss of every instance for \EQ{} and \ST{} architectures. Similar to Fig.~\ref{fig:reg2_tloss_vloss_8x8}, we demonstrate the generalization capabilities of our models to different lattice sizes from $8 \times 8$ up to $64 \times 64$ and different physical parameters, while being trained on only $8 \times 8$. In particular, \EQ{} models (green circles) are closer to the black line where test loss and validation loss agree.}
    \label{fig:reg2_tloss_vloss}
\end{figure}

An in-depth analysis is shown in Figs.~\ref{fig:reg2_tloss_vloss_8x8} and~\ref{fig:reg2_tloss_vloss}, depicting the relationship between the test and validation loss for all 20 models of each architecture. If an architecture is prone to generalization issues, its instances are scattered mostly vertically, which manifestly happens to most of the \ST{} and \FLAT{} architectures. \EQ{} models are instead distributed closer to the black line, where test and validation loss are equal. These results suggest that for \EQ{} architectures low validation loss correlates with low test loss, and therefore they tend to reliably generalize.

Remarkably, there is also a non-equivariant architecture featuring this behavior, specifically the 2nd \ST{}, whose best two instances even outperform the best \EQ{} models by almost an order of magnitude both in the validation and in the test loss. This is an illustration that the validation procedure does not guarantee generalization if the validation set is restricted to a small set of physical parameters. Indeed, the test loss on $8\times8$ lattices in the central column in table~\ref{tab:reg2_metrics} already contains a generalization in terms of physical parameters, which implies that it would be sufficient to include at least some of those configurations in the validation set in order to select the best generalizing architecture on different lattice sizes.

We observe two distinctive properties of this outstanding \ST{} architecture, that are possible contributing factors to its success: it is the only one containing a spatial max pooling layer and it is characterized by a very small number of parameters, the smallest of all the examined architectures, as can be seen in table~\ref{tab:reg2_archs}. Max pooling can be beneficial for the detection of local defects, as pointed out in the previous task, while the relative simplicity of this counting problem calls for simpler network structures. Indeed, \textit{optuna} favors overall small architectures for this task too, with a number of parameters between $\sim 100$ and $\sim 3000$ for the ones studied in detail.

\section{Conclusions and outlook} \label{sec:ConclusionsAndOutlook}

In this work, we studied the effect of imposing global translational invariance on convolutional neural network architectures. We did so by comparing three different architecture types that are commonly used and which differ with regards to their equivariance and generalization properties. Network architectures that use only convolutions or pooling operations of stride one and a global pooling layer before a subsequent dense network preserve translational equivariance. Such networks are also able to generalize to different input sizes if the global pooling operation is compatible with the intensive or extensive property of the output quantity. Network architectures that contain pooling operations with a stride greater than one generally break translational equivariance. Using a flattening operation instead of global pooling further impairs translational equivariance of the network and restricts its usage to one particular input shape, preventing a straightforward generalization to other lattice sizes without retraining. This latter architecture type has been particularly popular in image classification tasks and has subsequently also been used in physics applications.

We chose three different tasks related to characterizing complex scalar field configurations on a two-dimensional lattice with periodic boundary conditions that are given in the flux representation. This representation contains integer-valued field configurations which have to obey a flux conservation law. Valid configurations are generated using the worm algorithm. The first task we performed was a regression task to predict the particle density $n$ and the field average of $|\phi|^2$, given just the plain field configurations in flux quantities. The predicted observables also depend on various physical parameters, including the chemical potential~$\mu$, that are set during the generation of the configurations. We found that it is sufficient to train at only one value of~$\mu$ in order to be able to generalize to other values of~$\mu$, in particular also to extrapolate beyond the Silver Blaze phase transition. While this result seems surprising at first, it can be explained by the fact that different input configurations at fixed training chemical potential~$\mu_\mathrm{train}$ already cover a wide range of possible input values that are shared between physical parameters, i.e.~other values of~$\mu$. Comparing the three architecture types by selecting their best-performing representatives from a network architecture search using \textit{optuna}, we generally find that equivariant architectures perform best at this task. They excel, in particular, when increasing the size of the training set. We also explored whether data augmentation on the input side can compensate for missing equivariance, but this turned out to have a barely noticeable effect on the result. Furthermore, we investigated the generalization properties to smaller and larger lattice sizes, which is possible only for architectures that contain a global pooling layer. Again, across all lattice sizes, the equivariant architecture wins. Strided architectures can generalize only to lattice sizes that are multiples of the stride combinations used in the network, whereas flattening architectures are not able to generalize at all to other lattice sizes. These architectures would have to be retrained for each input size separately.

The next two tasks were related to detecting and counting flux violations from open worm configurations. Such configurations can appear in the calculation of $n$-point functions. They are particularly interesting as the result cannot be approximated by a purely local function, which would involve only $1 \times 1$ convolutions, but requires at least a $2 \times 2$ convolution. In the case of detecting flux violations, flattening models perform worst, while equivariant and strided architectures with global pooling layers are both able to predict the result with comparably high accuracy. Inspecting the feature maps of the trained models, we found that these models learn to detect only one end of the open worms, but this is sufficient to solve this task. For the third task to count the number of flux violations, we trained only on configurations containing either zero or five open worms and tested on configurations that contained any number of worms from zero to ten. In this setup, again the equivariant architecture wins compared to strided or flattening architectures. Interestingly, the networks have most problems to differentiate between zero and one worms, while a larger number of worms poses fewer problems. Another interesting observation is that the selection procedure of the network architecture search can lead to different optimal choices with very different generalization properties. Because of our particular choice of validation and test data, the validation loss alone is not sufficient to select the architecture that can generalize best to other physical parameters or network sizes. The optimal models we found were much smaller regarding the number of weight parameters than models used in comparable studies in the literature.

Based on our findings, we can clearly recommend using global pooling layers in future machine learning tasks that involve systems with global translational invariance. Global pooling layers allow to easily generalize results to different lattice sizes in regression and classification tasks. Whether the advantages of using pooling layers with a stride greater than one outweigh the possible disadvantages of breaking translational equivariance depends on the system being studied. An interesting aspect that warrants further study is the question of why some architectures seem to generalize better than others and whether there is a way to identify or characterize such architectures already before testing on an extended test set. Moreover, physical parameters may not be the best quantities for assessing generalization capabilities, but one should rather study the distribution of input and output values of the network. While in this work we concentrated on translational symmetry, it would be interesting to extend this study in the future to further symmetries on the lattice using, for example, G-CNNs. One could also examine coset pooling at intermediate layers that respect translational invariance. Finally, based on our findings, it seems worthwhile to investigate and study possible translationally equivariant versions of current architectures that explicitly break translational invariance, even though the underlying theory would respect this symmetry.

\section{Acknowledgement}

We thank Kai Zhou for correspondence. DM thanks Jimmy Aronsson for valuable discussions regarding group equivariant neural networks. This work has been supported by the Austrian Science Fund FWF No.~P32446-N27, No.~P28352 and Doctoral program No. W1252-N27. The Titan\,V GPU used for this research was donated by the NVIDIA Corporation.

\appendix
\section{The complex scalar field} \label{app:ComplexScalarField}

The action of a $1 + 1$-dimensional complex scalar field~$\phi$ in the continuum with quartic interaction, a non-zero chemical potential~$\mu$ and no external sources can be written as
\begin{equation}
    S  = \mspace{-5mu} \int \mspace{-4mu} \mathrm{d}x_0 \mathrm{d}x_1 \left(  \lvert D_0 \phi \rvert^2 - \lvert \partial_1 \phi \rvert^2  - m^2 \lvert \phi \rvert^2 - \lambda \lvert \phi \rvert^4 \right) \mspace{-2mu}, \label{action_complex_scalar_field_continuum}
\end{equation}
with $D_0 = \partial_0 - i \mu$, the mass~$m$ and the coupling constant~$\lambda$. The invariance property of the action under translations in time and space gives rise to the conservation of the energy momentum tensor. After a Wick rotation
\begin{equation}
x_0 \rightarrow i x_2,\quad x_2 \in \mathbb{R},
\end{equation}
we obtain the imaginary time version of the action in Eq.~\eqref{action_complex_scalar_field_continuum}, namely
\begin{equation}
    S_E \mspace{-1mu} = \mspace{-5mu} \int \mspace{-4mu} \mathrm{d}x_1 \mspace{-1mu} \mathrm{d}x_2 \mspace{-2mu} \left( \lvert \partial_1 \phi \rvert^2  \mspace{-2mu} + \lvert D_2 \phi \rvert^2 \mspace{-2mu} + m^2 \lvert \phi \rvert^2 \mspace{-2mu} + \lambda \lvert \phi \rvert^4 \right) \mspace{-2mu}, \label{action_complex_scalar_field_continuum_euclidean}
\end{equation}
with $D_2 = \partial_2 + \mu$ and the imaginary time~$x_2$.

The Euclidean action, which is given by Eq.~\eqref{action_complex_scalar_field_continuum_euclidean}, can be discretized, which makes it possible to analyze the complex scalar field on the lattice. The result reads (see e.g.~\cite{Gattringer:2013b})
\begin{IEEEeqnarray}{rCl}
    \nonumber S_{lat} &=& \sum_x \left( \eta \lvert \phi_x \rvert^2 + \lambda \lvert \phi_x \rvert^4 \phantom{\sum_{\nu = 1}^2} \right. \\
    && - \> \left. \sum_{\nu = 1}^2 \left( e^{\mu \mspace{2mu} \delta_{\nu, 2}} \phi_x^* \phi_{x + \hat{\nu}} + e^{- \mu \mspace{2mu} \delta_{\nu, 2}} \phi_x^* \phi_{x - \hat{\nu}} \right) \right), \IEEEeqnarraynumspace \label{action_complex_scalar_field_lattice}
\end{IEEEeqnarray}
where \mbox{$\eta = 2 D + m^2 = 4 + m^2$}, and $\delta_{\nu,2}$ is the Kronecker delta. The first sum is over all lattice sites~$x$, and the second one is over the two directions: space and imaginary time. The position~\mbox{$x + \hat{\nu}$} is reached by moving one unit vector~$\hat{\nu}$ from~$x$ in the $\nu$~direction. Naturally, periodic boundary conditions are employed. In Eq.~\eqref{action_complex_scalar_field_lattice} we have explicitly set the lattice spacing to unity. This implies that all dimensionful quantities such as $m$ and $\mu$ are understood to be given in appropriate units of the lattice spacing. We limit the extension of the system to $L$ in the spatial direction and to $1/T$ in the temporal one, where $T$ denotes the temperature.

For non-zero chemical potential~$\mu$, the action in Eq.~\eqref{action_complex_scalar_field_lattice} becomes complex. This is problematic because in this case the term~\mbox{$e^{-S_{lat}}$} cannot be interpreted as a probability distribution, and therefore it is not possible to use standard Monte Carlo sampling to determine the partition function
\begin{equation}
    Z = \int \! \mathcal{D}\phi \, e^{-S_{lat}}
\end{equation}
and its derivatives. To circumvent this so-called complex action problem, which is also known as the sign problem, one can work in a dual formulation, known as flux representation. The derivation of the partition function in the flux representation can be found in~\cite{Gattringer:2013b}. The result reads
\begin{IEEEeqnarray}{rCl}
    \nonumber Z &=& \sum_{ \{ k, l \} } \left( \prod_{x, \nu} \frac{1}{(\lvert k_{x, \nu} \rvert + l_{x, \nu})! l_{x, \nu}!} \right) \left( \prod_x e^{\mu k_{x,2}} W(f_x) \right) \\
    && \> \times \left( \prod_x \delta \left( \sum_\nu (k_{x, \nu} - k_{x-\hat{\nu}, \nu}) \right) \right), \IEEEeqnarraynumspace \label{compl:flux_representation}
\end{IEEEeqnarray}
with
\begin{IEEEeqnarray}{rCl}
    \nonumber \sum_{ \{ k, l \} } &=& \prod_{x, \nu} \mspace{8mu}\sum_{k_{x, \nu} = - \infty}^\infty \mspace{8mu} \sum_{l_{x, \nu} = 0}^\infty \\ &=&\sum_{k_{1,1}=-\infty}^\infty  \mspace{8mu} \sum_{l_{1,1}=0}^\infty \mspace{8mu} \sum_{k_{1,2}=-\infty}^\infty \cdots \sum_{l_{N,2}=0}^\infty,
\end{IEEEeqnarray}
where the $N$ lattice sites have been labeled with numbers $x\in\{1,2,\ldots,N\}$. The degrees of freedom are the four integer fields~$k_{x, \nu}$ and~$l_{x, \nu}$, where~$\nu = 1,2$. The former must obey the flux conservation law
\begin{equation}
    \sum_{\nu} \left( k_{x, \nu} - k_{x - \hat{\nu}, \nu} \right) = 0 \label{compl:flux_conservation}
\end{equation}
at all lattice sites~$x$ for the Kronecker delta not to vanish; the latter are non-negative. The function~$W(f_x)$ is given by
\begin{equation}
    W(f_x)=\int_0^\infty \mathrm{d}x\, x^{f_x+1}\mathrm{e}^{-\eta x^2-\lambda x^4}, \label{compl:W}
\end{equation}
and its integer valued argument reads
\begin{equation}
f_x=\sum_\nu[|k_{x,\nu}|+|k_{x-\hat{\nu},\nu}|+2(l_{x,\nu}+l_{x-\hat{\nu},\nu})]. \label{f_x_definition}
\end{equation}

Observables can be derived from the partition function and written in terms of the dual variables~$k_{x, \nu}$ and~$l_{x, \nu}$. In this paper, two quantities are of special interest, namely the particle number density~$n$ and the lattice averaged squared absolute value of the field~$\lvert \phi \rvert^2$. Their ensemble averages \mbox{$\langle \cdots \rangle$} are given by
\begin{IEEEeqnarray}{rCcCl}
    \langle n \rangle &=& \frac{T}{V} \frac{\partial \ln Z}{\partial \mu} &=& \frac{1}{N_x N_t} \left \langle \sum_x k_{x, 2} \right \rangle, \label{compl:n}\\
    \langle \lvert \phi \rvert^2 \rangle &=& - \frac{T}{V} \frac{\partial \ln Z}{\partial \eta} &=& \frac{1}{N_x N_t} \left \langle \sum_x \frac{W(f_x + 2)}{W(f_x)} \right \rangle,\label{compl:phi} \IEEEeqnarraynumspace
\end{IEEEeqnarray}
where $N_x$ ($N_t$) is the number of lattice sizes in the spatial (temporal) direction.

For our machine learning tasks, we associate each individual configuration $\{ k_{x,\mu}, l_{x,\mu} \}$ with particular values of $n$ and $|\phi|^2$ in Eqs.~\eqref{Regression:n_formula} and~\eqref{Regression:phi2_formula}, even though the dual formulation does not allow for a direct mapping between field configurations ${\phi_x}$ and link configurations $\{ k_{x,\mu}, l_{x,\mu} \}$.

\section{Datasets} \label{app:Datasets}

In this appendix, we discuss the Monte Carlo procedure we use to generate the datasets for our machine learning tasks.

The flux representation, which is given by Eq.~\eqref{compl:flux_representation}, is characterized by the positive field $l$ and the field $k$ constrained by Eq.~\eqref{compl:flux_conservation}. Since they are different in nature, a suitable algorithm is composed of two distinct parts, each of which takes care of the modifications of the respective field. The link variables $l$ are updated using a standard Monte Carlo algorithm, where the Metropolis acceptance probabilities are ratios of Boltzmann weights of the dual action. The links $k$ are updated by means of the worm algorithm, originally proposed in \cite{Prokofev:2001}, where the acceptance probabilities follow the prescriptions given in \cite{Gattringer:2013b}. Using these algorithms, we generate all datasets in this work.

The initial configuration is set to zero at every lattice site for both $k$ and $l$. Before reaching equilibrium, the system undergoes a thermalization phase, which we discard. Since autocorrelation in the dataset can affect the learning process, we monitor it and set an appropriate number of waiting sweeps between each measurement.

\subsection{Regression: predicting observables on the lattice}

The dataset contains lattice configurations and corresponding $n$ and $|\phi|^2$ values, the first ones being the input for the CNN and the latter being the quantities to predict. We create data with the following set of physical parameters: $\eta=4.01$, $\lambda=1$ and $\mu \in \{ 0.91,\dots,1.5 \}$, where values in the range $[0.91,1.05]$ are separated by $\Delta\mu=0.005$, while $\Delta\mu=0.1$ in the range $[1.1,1.5]$. We choose five different lattice sizes: $50\times2$, $60\times4$, $100\times5$, $125\times8$ and $200\times10$, where the first number is $N_t=1/T$ and the second one $N_x=L$. Different $N_t$ means different temperature $T$, which influences the properties of the phase transition, as shown in Fig.~\ref{fig:test_plots_mu_60_times_4}. The total amount of training data is $N_{\mathrm{train}}=20000$, generated at $\mu=1.05$ on the $60\times4$ lattice, and the whole validation set consists of $N_{\mathrm{val}}=2000$ at the same $\mu$ and lattice size. We define two distinct test sets, both containing 4000 data points per each $\mu$ at each lattice size. The first test set (test set~A) is characterized by values of $\mu \in \{ 0.91,\dots,1.05 \}$, which correspond to the ones used in~\cite{Zhou:2019}, in order that a direct comparison with the results found there is possible. The second test set (test set~B) is designed to examine the extrapolation abilities of the neural networks to chemical potentials higher than the one they have been trained on, specifically in the range $\mu \in \{ 1.1,\dots,1.5 \}$. The total amount of test data is $N_{\mathrm{test}}=4000\times5\times(29+5)=680000$. We discard the first 1000 sweeps to disregard thermalization and then save a configuration and the respective observables every five sweeps. For some combinations of chemical potential and lattice size, we observe a high autocorrelation. In these cases, the number of sweeps is increased to 50, which sufficiently reduces the autocorrelation.

We now closely inspect the distribution of the two fundamental quantities needed for the computation of the observables $n$ and $|\phi|^2$, namely $k_{x,t}$ and $f_x$. This is meant to give an additional insight into the dataset properties and the generalization capabilities of the architectures. We remind that $k_{x,t}$ can take any integer value, while $f_x$ is either 0 or a positive even number. In the following discussion, we omit the lattice index $x$ and use $k_t$, $k_x$ and $f$ instead. 

\begin{figure}
    \centering
    \includegraphics{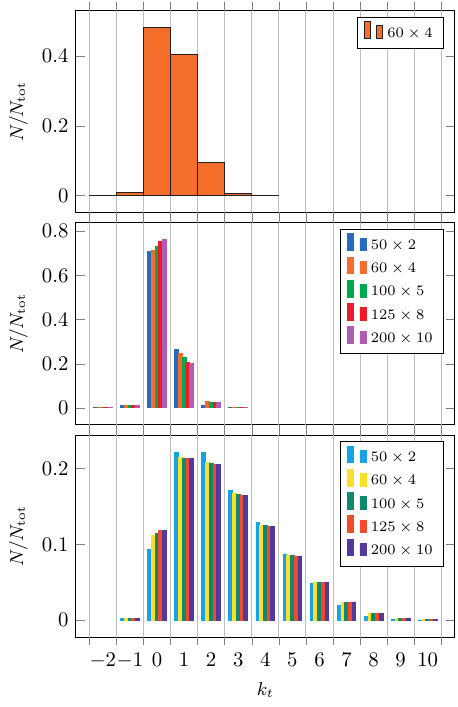}
    \caption{Distributions of the link field $k_t$. These histograms feature the distributions of $k_t$ in the training set (top), test set~A (middle) and test set~B (bottom). The test sets maintain a similar distribution along different lattice sizes. Even though training and test set~A cover the same domains, their distributions are different, which is the origin of the generalization issues of some architectures. The distribution of test set~B also reaches higher values of $k_t$, which can make a generalization to data in test set~B even more difficult than to data in test set~A. Bars corresponding to weights smaller than $10^{-4}$ in each plot are not shown.}
    \label{fig:data_distr_reg_kt}
\end{figure}

\begin{figure}
    \centering
    \includegraphics{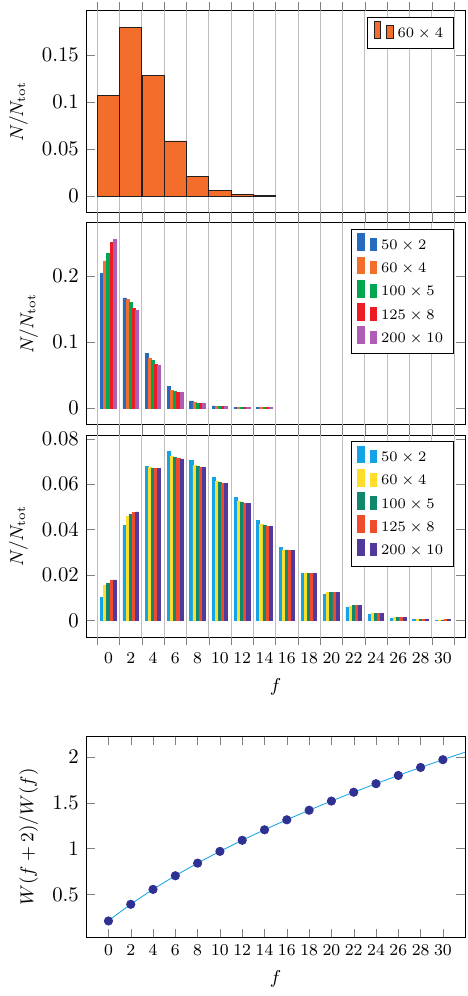}
    \caption{Distributions of $f$ and ratio of $W(f)$. The histograms show the distributions of $f$ in the training set (top), test set~A (middle) and test set~B (bottom). The last plot portrays $W(f+2)/W(f)$ evaluated with the same physical parameters used throughout the task, $\eta=4.01$ and $\lambda=1$. The markers represent even integer values of $f$,
    which enter the computation of $|\phi|^2$. In every histogram, we do not report weights below $10^{-4}$.}
    \label{fig:data_distr_reg_f}
\end{figure}

In Figs.~\ref{fig:data_distr_reg_kt} and~\ref{fig:data_distr_reg_f}, the first histogram corresponds to the distribution of the training set, while the second and third show the distributions of test sets~A and~B, respectively. The domains covered by the training set and test set~A are approximately the same, meaning that the generalization we require does not involve an extrapolation. Given this, it is easy to see why models that perform well during training and validation are able to generalize to different physical parameters. Despite having the same domain, there is an evident discrepancy in the distributions of the training set and test set~A, which is caused by the different values of $\mu$ that are used to generate the lattice configurations. This could be a possible source of generalization issues affecting some architectures. Note that while the distributions appear to be somewhat similar, there may still be additional differences in the correlations of these quantities, which could further impair generalization capabilities of networks. The domain that is covered by test set~B, however, is larger than the domain of the training set, arguably making the generalization to these data more demanding.

While the relationship between $\langle n\rangle$ and $k_t$ is linear, as shown in Eq.~\eqref{compl:n}, $\langle|\phi|^2\rangle$ is highly non-linear in $f$, as indicated by Eqs.~\eqref{compl:phi} and~\eqref{compl:W}. One might find it surprising that a CNN is able to learn such a complicated function and even generalize to other physical parameters. Alongside the observations on domain and distributions, we have to consider the ratio $W(f+2)/W(f)$ that enters in Eq.~\eqref{compl:phi}. As shown in Fig.~\ref{fig:data_distr_reg_f}, it can be effectively approximated by a linear function in the range where most of the distribution of the training set and test set~A is concentrated. This explains why even simple models can easily learn to predict $|\phi|^2$ on these data. The larger values of~$f$ that are represented in test set~B, however, lead to larger values of said ratio. At these values, the linear approximation that might be a good approximation on the training set and test set~A worsens, and so do the predictions of~$|\phi|^2$.

\subsection{Classification: detecting flux violations}

The algorithm presented in \cite{Gattringer:2013b} is designed to generate only closed worm configurations, which respect the flux conservation and allow to compute the observables $n$ and $\lvert \phi \rvert^2$. For the classification and regression tasks in sections~\ref{sec:Classification} and~\ref{sec:MultipleWorms}, we want to create field configurations where the flux conservation \eqref{compl:flux_conservation} is violated. In order to do this, we modify the algorithm of the previous subsection in the following manner: after equilibrium is reached via the original $l$ and $k$ alternate update, we start a new worm and save the configuration with one open worm. As the worm moves on the lattice, we replace the stored configuration with probability $1/L$, where $L$ is the current worm length, until the worm closes. One can easily check that this corresponds to selecting one of the open worm configurations with equal probability.

The dataset consists of closed worm configurations, labeled as class 0, and open worm configurations, labeled as class 1, each originating from two independent runs of the algorithm. Both classes are characterized by the same physical parameters, namely $\eta\in\{4.01,4.04,4.25\}$, $\lambda=1$, $\mu\in\{1,1.25,1.5\}$ and $N_t=N_x\in\{8,16,32,64\}$. The training set is generated on a particular subset, specifically the two combinations $(\eta,\mu)\in\{(4.01,1.5),(4.25,1)\}$ on the smallest lattice size, i.e. $8\times8$, with a total number of $N_{\mathrm{train}}=4000$ samples equally distributed among each class and parameter combination. The validation set has the same structure, the only difference being the number of samples of $N_{\mathrm{val}}=400$. The test set contains 100 instances per each class and parameter combination, summing up to $100\times2\times36=7200$ samples. The number of skipped configurations to avoid picking samples while the thermalization process is still ongoing is chosen as 2000. We use $100$  waiting sweeps between each measurement. The dataset created for this task and the next one share very similar characteristics. We address the analysis of only the third task in the following subsection, implying that the considerations we make there are also valid in this context.

\begin{figure}[ht]
    \centering
    \includegraphics{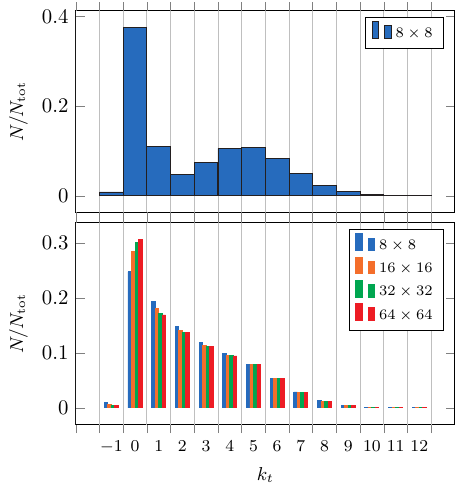}
    \caption{Distributions of the link field $k_t$. These two histograms feature the distributions of $k_t$ in the training set (top) and in the test set (bottom).}
    \label{fig:data_distr_reg2_kt}
\end{figure}

\begin{figure}[ht]
    \centering
    \includegraphics{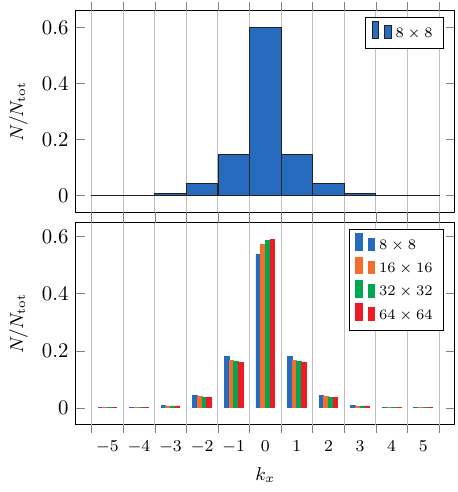}
    \caption{Distributions of the link field $k_x$. These two histograms feature the distributions of $k_x$ in the training set (top) and in the test set (bottom).}
    \label{fig:data_distr_reg2_kx}
\end{figure}

\subsection{Regression: counting flux violations}

The algorithm designed in the previous task is extended to account for multiple worms. After the first configuration with an open worm is saved as described in the last section, it becomes the starting configuration for the next worm to be drawn. We explicitly prohibit that a worm can cross heads and tails of previous worms, i.e. lattice sites where the flux is violated. By doing this, we ensure the absence of mathematical ambiguity in the definition of the Metropolis acceptance probability. As a consequence, three values for the flux are possible: 0, $+1$ and $-1$. The procedure is repeated until the required number of open worms is reached and the configuration is saved. Then the last configuration without open worms is restored, the established waiting sweeps are performed and another set of open worms is drawn.

The same set of physical parameters of the previous task is used with the addition of the number of worms $N_\mathrm{worms}\in\{0,1,\dots,10\}$. The subset of parameters for the training set is chosen as the combinations $(\eta,\mu)\in\{(4.01,1.5),(4.25,1)\}$ and $N_\mathrm{worms} \in\{0,5\}$, again on the smallest lattice size. The total amount of data used is $N_{\mathrm{train}}=20000$ when training only on 0 and five worms. The validation set consists as usual of a number of configurations such that $N_{\mathrm{val}} = N_{\mathrm{train}} / 10$ of the number of training samples. The test set contains again 100 samples per parameter combination, leading to a total of $100\times11\times36=39600$ instances. Initially skipped configurations and waiting sweeps are the same as in the previous task.

The two quantities necessary for the computation of the flux are the two integer fields $k_t$ and $k_x$, as suggested by Eq.~\eqref{compl:flux_conservation}. Their distributions are depicted, respectively, in Figs.~\ref{fig:data_distr_reg2_kt} and~\ref{fig:data_distr_reg2_kx}. We can draw the same conclusions as in the first task concerning the similarity of the domains and the difference in the distributions between training and test sets. We add that the choice of $(\eta,\mu)$ for the training set is made specifically to include the lower and higher values of $k_t$, in such a way that the domain covered is the same in the training and in the test set. This explains the two peaks in the $k_t$ training distribution in the top histogram in Fig.~\ref{fig:data_distr_reg2_kt}. Such behavior does not emerge in the case of $k_x$ because, unlike $k_t$, it is not coupled with the chemical potential.

\section{Additional Proofs} \label{app:Proofs}
\newtheorem{lemma}{Lemma}
\newtheorem*{remark}{Remark}
\newtheorem{theorem}{Theorem}

This appendix contains proofs referenced in section~\ref{sec:SymmetryProperties}. The idea of the first lemma is to show a simple, albeit arguably trivial example of a network's prediction being invariant under translations of the input even though the network contains a layer that breaks translational equivariance before the global pooling layer.

\begin{lemma}
    Given an $N \times N'$ feature map~$f$ and a $k \times k$ spatial average pooling layer~$P$ with stride \mbox{$s = k$}, \mbox{$k | N$} and \mbox{$k | N'$}, applying a global average pooling layer directly after the spatial average pooling layer is equivalent to applying only the global average pooling layer and omitting the spatial average pooling layer.
    \label{proofs:lemma:SpatAndGAP}
\end{lemma}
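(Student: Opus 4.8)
The plan is to show directly that both procedures reduce to the same single arithmetic mean over all $N N'$ entries of the feature map, so their equality is nothing more than the statement that a two-level average with equal block sizes collapses into one global average.

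First I would fix notation. Since the stride equals the kernel size $k$ and $k \mid N$, $k \mid N'$, the pooling windows tile the feature map exactly: writing $M = N/k$ and $M' = N'/k$, the sites $(i,j)$ with $i \in \{0,\dots,N-1\}$ and $j \in \{0,\dots,N'-1\}$ partition into $M M'$ disjoint blocks indexed by $(a,b)$, where block $(a,b)$ consists of the $k^2$ sites $(ak+p,\, bk+q)$ with $p,q \in \{0,\dots,k-1\}$. This non-overlapping, exhaustive tiling is the only place the divisibility hypotheses enter, and it is exactly what makes the argument go through. I would then write the spatial average pooling output as the block average
\[
    [P f](a,b) = \frac{1}{k^2} \sum_{p=0}^{k-1} \sum_{q=0}^{k-1} f(ak+p,\, bk+q),
\]
and apply the global average pooling layer, which averages $Pf$ over all $M M'$ of its entries:
\[
    \mathrm{GAP}(P f) = \frac{1}{M M'} \sum_{a=0}^{M-1} \sum_{b=0}^{M'-1} [P f](a,b) = \frac{1}{M M' k^2} \sum_{a,b,p,q} f(ak+p,\, bk+q).
\]

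Finally I would combine the two normalizations and re-index. Because $M M' k^2 = (N/k)(N'/k)\,k^2 = N N'$, and because the blocks partition the feature map, the quadruple sum runs over every site $(i,j)$ exactly once and therefore equals $\sum_{i,j} f(i,j)$. Hence $\mathrm{GAP}(P f) = \frac{1}{N N'} \sum_{i,j} f(i,j) = \mathrm{GAP}(f)$, the right-hand side being precisely global average pooling applied to $f$ with no intermediate layer (the single-channel case stated suffices, since both operations act channelwise). The main obstacle is conceptual rather than computational: one must notice that the \emph{equal} cardinality $k^2$ of every pooling window is what lets the inner factor $1/k^2$ and outer factor $1/(M M')$ merge into the single weight $1/(N N')$. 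Were the windows of unequal size, as happens when $k \nmid N$ and the boundary windows are truncated, the nested averages would be weighted differently and the equality would fail; this is why the divisibility conditions $k \mid N$ and $k \mid N'$ cannot be dropped.
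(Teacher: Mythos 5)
Your proof is correct and follows essentially the same route as the paper's: both reduce $\mathrm{GAP}(Pf)$ to a single average over all $NN'$ sites by writing the pooling output as a block (or constant-filter strided-convolution) average, merging the normalizations via $(N/k)(N'/k)\,k^2 = NN'$, and re-indexing the nested sum as one sum over the full feature map. Your explicit block-coordinate tiling makes the re-indexing step (which the paper compresses into ``the second equality holds for $s=k$, $k\,|\,N$, $k\,|\,N'$'') and the necessity of the divisibility hypotheses more transparent, but it is the same argument.
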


\begin{proof}
    We want to show that
     \begin{equation}
         \GAP ( P f(x) ) = \GAP ( f(y)). \label{proofs:SpatAndGAP}
    \end{equation}
    The global average pooling over an $N \times N'$ feature map~$f$ is given by
    \begin{equation}
         \GAP ( f(y) ) = \frac{1}{N N'} \sum_{y \in F} f(y), \label{proofs:GAP_def}
    \end{equation}
    with
    \begin{equation}
        f: F \subset \mathbb{Z}^2 \rightarrow \mathbb{R}.
    \end{equation}
    The spatial average pooling~$P$ can be interpreted as a special convolutional layer
    \begin{equation}
         P f(x) = [f \star \psi]_{s = k} (x) = \frac{1}{k^2} \sum_{\phi \in \Psi} f(k x + \phi), \label{proofs:spat_avg_pool_as_conv}
    \end{equation}
    using the filter $\psi(x) = 1/k^2$, where
    \begin{equation}
        \psi: \Psi \subset \mathbb{Z}^2 \rightarrow \mathbb{R}.
    \end{equation}
     The resulting feature map $f': F' \subset \mathbb{Z}^2 \rightarrow \mathbb{R}$ has the dimensions $N/k \times N'/k$. The validity of Eq.~\eqref{proofs:SpatAndGAP} can be seen by
     \begin{align}
         \nonumber \GAP ( P f(x) ) &= \frac{1}{(N / k) (N' / k)} \sum_{x \in F'} \frac{1}{k^2} \sum_{\phi \in \Psi} f(kx + \phi) \\
         \nonumber &= \frac{1}{N N'} \sum_{y \in F} f(y) \\
         &= \GAP ( f(y)).
     \end{align}
     The first step uses Eqs.~\eqref{proofs:GAP_def} and~\eqref{proofs:spat_avg_pool_as_conv}; note that the GAP is performed over the feature map~$f'$. The second equality holds for \mbox{$s=k$}, \mbox{$k | N$} and \mbox{$k | N'$}, and the last one utilizes again Eq.~\eqref{proofs:GAP_def} but for the feature map~$f$.
     \end{proof}

\begin{remark}
    Even though the spatial average pooling layer with $s > 1$ breaks translational equivariance under arbitrary translations, the result after the GAP is still invariant under translations.
\end{remark}

The following lemma shows that a convolutional layer that is directly followed by a global average pooling layer does not have the effect that one might expect a regular convolution to have. Loosely speaking, it does not effectively increase the network's depth because it collapses with the global pooling layer. This lemma should therefore also highlight the importance of the usage of an activation function between the last convolutional and the global average pooling layer.

\begin{lemma}
    Given an $M \times M'$ feature map~$f'$ and an $l \times l$ convolution $\psi'$: $\Psi' \subset \mathbb{Z}^2 \rightarrow \mathbb{R}$ with a stride of one and a single output channel, applying the convolution and then performing a global average pooling is equivalent to performing the global average, multiplying it by the sum of the convolution's weights and adding the bias~$b$. \label{proofs:lemma:ConvAndGAP}
\end{lemma}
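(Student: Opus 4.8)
The plan is to prove the identity directly, namely to show that
\begin{equation}
    GAP([f' \star \psi'](x)) = \left( \sum_{\phi \in \Psi'} \psi'(\phi) \right) GAP(f'(y)) + b,
\end{equation}
by inserting the definitions and exchanging the order of the two finite sums that appear. First I would reinstate the bias term that was dropped in Eq.~\eqref{layers:cross_correlation}, so that the stride-one convolution reads $[f' \star \psi'](x) = \sum_{\phi \in \Psi'} f'(x + \phi)\psi'(\phi) + b$. Because of the circular padding, its output is again defined on the full $M \times M'$ domain $F$, which means the subsequent GAP averages over exactly the same $M M'$ sites as a GAP applied to $f'$ itself.

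Next I would substitute this convolution into the GAP definition used in the proof of Lemma~\ref{proofs:lemma:SpatAndGAP}, obtaining
\begin{equation}
    GAP([f' \star \psi'](x)) = \frac{1}{M M'} \sum_{x \in F} \left( \sum_{\phi \in \Psi'} f'(x + \phi)\psi'(\phi) + b \right).
\end{equation}
The constant bias contributes $\tfrac{1}{M M'}\sum_{x \in F} b = b$, since there are precisely $M M'$ lattice sites. For the remaining double sum I would interchange the order of summation and pull the weight $\psi'(\phi)$ out of the inner sum, so that the expression becomes $\tfrac{1}{M M'} \sum_{\phi \in \Psi'} \psi'(\phi) \sum_{x \in F} f'(x + \phi)$.

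The only nontrivial step, and the one I expect to be the crux, is evaluating the inner sum $\sum_{x \in F} f'(x + \phi)$. Here the periodic boundary conditions are essential: for each fixed shift $\phi \in \Psi'$, the map $x \mapsto x + \phi$ is a bijection of $F$ onto itself (merely a relabeling of the periodically repeated lattice), so the sum is invariant under the shift, $\sum_{x \in F} f'(x + \phi) = \sum_{y \in F} f'(y) = M M' \, GAP(f'(y))$, independently of $\phi$. Substituting this back cancels the prefactor $1/(M M')$ and leaves $\left( \sum_{\phi \in \Psi'} \psi'(\phi) \right) GAP(f'(y)) + b$, which is the claim. I would stress that without periodicity the boundary terms would spoil the shift-invariance of the inner sum, and that this collapse is precisely why an intervening activation function is required for the last convolution to contribute genuine expressive depth rather than reducing to a single affine rescaling of the GAP.
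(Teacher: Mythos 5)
Your proof is correct and follows essentially the same route as the paper's: insert the definitions of the convolution (with bias) and the GAP, exchange the two finite sums, pull out the kernel weight, and use periodic boundary conditions to show the inner sum $\sum_{x \in F} f'(x + \phi')$ is shift-independent and equals $M M' \, GAP(f')$. Your explicit bijection argument for the periodicity step is a slightly more detailed justification of what the paper states in one line, but the structure of the argument is identical.
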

\begin{proof}
    What we want to show is
    \begin{equation}
        \GAP ( [ f' \mspace{-3mu} \star \psi' ] (x) + b) = \GAP (f'(x)) \mspace{-6mu} \sum_{\phi' \in \Psi'} \mspace{-4mu} \psi'(\phi') + b. \label{proofs:ConvAndGAP}
    \end{equation}
    This can be seen by
    \begin{align}
        \GAP ( [ f' \mspace{-3mu} &\star \psi' ] (x) + b) \nonumber \\
        &= \frac{1}{M M'} \sum_{x \in F'} \mspace{-4mu} \left( \sum_{\phi' \in \Psi'} f'(x + \phi') \psi'(\phi') + b \right) \nonumber \\
        &= \frac{1}{M M'} \sum_{x \in F'} \sum_{\phi' \in \Psi'} f'(x + \phi') \psi'(\phi') + b \nonumber \\
        &= \sum_{\phi' \in \Psi'} \mspace{-4mu} \psi'(\phi') \frac{1}{M M'} \sum_{x \in F'} f'(x + \phi') + b \nonumber \\
        &= \sum_{\phi' \in \Psi'} \mspace{-4mu} \psi'(\phi') \mspace{5mu} \GAP (f'(x)) + b \nonumber \\
        &= \GAP (f'(x)) \mspace{-4mu} \sum_{\phi' \in \Psi'} \mspace{-4mu} \psi'(\phi') + b.
    \end{align}
    The first equality combines the definitions of the GAP, given by Eq.~\eqref{proofs:GAP_def}, and the convolution, given by Eq.~\eqref{layers:cross_correlation}. The second one utilizes the fact that the bias does not depend on the lattice site~$x$. The third step takes advantage of the fact that $\psi'$ does not depend on~$x$, and the fourth one makes use of the periodic boundary conditions and of Eq.~\eqref{proofs:GAP_def}. The last equality holds because the result of the GAP does not depend on~$\phi'$.
\end{proof}
\begin{remark}
    This is possible only without an activation function between the convolutional layer and the global average pooling. Also note the importance of periodic boundary conditions.
\end{remark}

The following theorem combines both lemmas and shows that a spatial average pooling layer, followed by a convolutional and a global average pooling layer, still leads to an output that is invariant under translations of the input if the strides and kernel sizes are chosen appropriately. It emphasizes once again the importance of an activation function before the global average pooling.

\begin{theorem}
    Given an $N \times N'$ feature map, a $k \times k$ spatial average pooling layer with stride $s = k$, $k | N$, $k | N'$ and an $l \times l$ convolution $\psi'$ with a stride of one and a single output channel, applying the spatial average pooling layer, then the convolution and then the global average pooling layer is equivalent to applying the global average pooling layer, multiplying the result by the sum of the convolution's weights and adding the bias~$b$. \label{proofs:theorem:SpatConvGAP}
\end{theorem}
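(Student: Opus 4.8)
The plan is to prove the theorem by composing the two preceding lemmas rather than recomputing the sums from scratch. The left-hand side applies three operations in sequence to the $N \times N'$ feature map $f$: first the spatial average pooling $P$, then the convolution $\psi'$ with bias $b$, and finally the global average pooling. My strategy is to peel these off from the outside in, treating the intermediate object $Pf$ as a feature map in its own right.

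First I would set $f' = Pf$ and observe that, because $s = k$ with $k \mid N$ and $k \mid N'$, the output $f'$ is a well-defined $(N/k) \times (N'/k)$ feature map that inherits periodic boundary conditions from $f$. This bookkeeping is the crucial preliminary, since Lemma~\ref{proofs:lemma:ConvAndGAP} requires both periodicity and integer dimensions in order to collapse the convolution into the global average. With these hypotheses verified, I would apply Lemma~\ref{proofs:lemma:ConvAndGAP} directly to $f'$, identifying its dimensions $M \times M' = (N/k) \times (N'/k)$, to obtain
\[
    GAP([Pf \star \psi'](x) + b) = GAP(Pf(x)) \sum_{\phi' \in \Psi'} \psi'(\phi') + b.
\]

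Next I would invoke Lemma~\ref{proofs:lemma:SpatAndGAP}, whose hypotheses ($s = k$, $k \mid N$, $k \mid N'$) are identical to those already assumed, to replace the factor $GAP(Pf(x))$ by $GAP(f(y))$. Substituting this equality into the expression above yields exactly the claimed right-hand side, $GAP(f(y)) \sum_{\phi' \in \Psi'} \psi'(\phi') + b$, completing the proof.

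Because both lemmas are already established, there is no genuinely difficult computational step; the only point requiring care is confirming that the divisibility conditions are precisely what both lemmas demand and that periodicity survives the subsampling, so that Lemma~\ref{proofs:lemma:ConvAndGAP} may legitimately be applied to the pooled map $Pf$. The conceptual takeaway I would emphasize is that the outer GAP renders the spatial pooling invisible while the convolution degenerates into a single multiplicative and additive constant; as in Lemma~\ref{proofs:lemma:ConvAndGAP}, an activation function between the convolution and the GAP is therefore essential if the intermediate layers are to contribute nontrivially to the network's output.
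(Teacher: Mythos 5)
Your proof is correct and follows essentially the same route as the paper, which likewise proves the theorem by combining Lemma~\ref{proofs:lemma:SpatAndGAP} and Lemma~\ref{proofs:lemma:ConvAndGAP} with $M = N/k$ and $M' = N'/k$. Your version is simply more explicit about the order of composition (Lemma~\ref{proofs:lemma:ConvAndGAP} applied to the pooled map $Pf$, then Lemma~\ref{proofs:lemma:SpatAndGAP} to replace $GAP(Pf)$ by $GAP(f)$) and about checking that periodicity and the divisibility conditions carry over to the intermediate feature map, which the paper leaves implicit.
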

\begin{proof}
    Combining lemmas~\ref{proofs:lemma:SpatAndGAP} and~\ref{proofs:lemma:ConvAndGAP} with $M = N/k$ and $M' = N'/k$ leads to the desired result.
\end{proof}
\begin{remark}
    The generalization to more than one feature map and multiple output channels is straightforward.
\end{remark}

\section{Partially occluded input} \label{app:PartiallyObscuredInput}

\begin{figure}[htbp]
	\centering
	\includegraphics{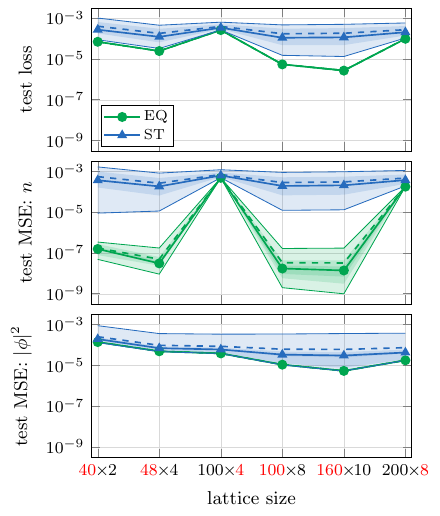}
	\caption{Test loss (top) and its two parts (middle and bottom) that come from each observable, corresponding to discarding $20 \%$ of the input data, on various lattice sizes. The networks are trained on a $60\times 4$ lattice. The dimension along which the input data are occluded is marked in red. For~$n$, the predictions become worse if the data are concealed in the spatial direction. For $\lvert \phi \rvert^2$, the predictions become worse if any data are hidden, but they are slightly worse if data are suppressed in the spatial direction. The reason for this lies in the nature of the observables.}
	\label{fig:test_loss_various_lattice_sizes_only_part_of_input}
\end{figure}

In this appendix, we analyze the worse performance of the \ST{} models on the \mbox{$100 \times 5$}~lattice from section~\ref{sec:Regression}, which is depicted in Fig.~\ref{fig:test_loss_various_lattice_sizes}. The source of the problem is that a model ignores part of the data at a strided operation if kernel and stride are not compatible with the data's shape that the layer in question receives. In the case of the \ST{} models, these operations are the strided spatial pooling layers. To examine this problem in more detail, we perform an experiment in which we hide a portion of the input data from the network for both architectures, \ST{} and \EQ{}: on every lattice size, the network is shown only a part of the input. This is done by discarding~\mbox{$20 \%$} of it either on the right or on the bottom of the lattice, so that the resulting restricted input still has a rectangular shape and thus a valid input size for the networks. This way, only~$80 \%$ of the input data are shown to the network, but the value of the observable still corresponds to the full lattice configuration. The input size of the \mbox{$50 \times 2$}~lattice becomes~\mbox{$40 \times 2$}, \mbox{$60 \times 4$} becomes~\mbox{$48 \times 4$}, \mbox{$100 \times 5$} becomes~\mbox{$100 \times 4$}, \mbox{$125 \times 8$} becomes~\mbox{$100 \times 8$}, and~\mbox{$200 \times 10$} becomes~\mbox{$160 \times 10$} and~\mbox{$200 \times 8$}, respectively. Note that the data are discarded four times in the temporal and twice in the spatial direction. The result of this experiment is shown in Fig.~\ref{fig:test_loss_various_lattice_sizes_only_part_of_input}. The overall test loss (top) shows two kinks, namely for the lattices, for which the input was restricted in the spatial direction. These kinks are also seen in the loss curve corresponding to~$\lvert \phi\rvert^2$ (bottom) but are much more pronounced for~$n$ (middle). In fact, if the data are discarded in the temporal direction, the quality of the prediction of~$n$ barely changes at all, as can be seen by comparing the middle plot in Fig.~\ref{fig:test_loss_various_lattice_sizes} to the middle plot in Fig.~\ref{fig:test_loss_various_lattice_sizes_only_part_of_input}. The reason for this is the way the configurations are generated, namely with the worm algorithm.

The nature of said algorithm is local, and each step involves adjacent points, giving rise to modifications of the field values that are contiguous on the lattice. This can be formally expressed by interpreting worms as paths on the torus-like space corresponding to the lattice with its periodic boundary conditions. For this task we deal with only closed worm configurations, so the paths are, in fact, loops. With this picture in mind, Eq.~\eqref{Regression:n_formula} represents an (averaged) winding number in the temporal dimension of the torus. Discarding data in this direction does not alter the winding number. On the other hand, if data are discarded along the other dimension, parts of the worm might be discarded as well, leading to a very high discrepancy from the true winding number.

The small kinks in $|\phi|^2$ can be explained by means of some additional remarks: first of all, in the range of $f_x$ in our dataset, the ratio $W(f_x+2)/W(f_x)$ is almost linear, so $|\phi|^2$ can be viewed as the average of $f_x$ in first approximation. The functions~$W$ and~$f_x$ are given by Eqs.~\eqref{compl:W} and~\eqref{f_x_definition}, respectively. The link variables $l_t$ and $l_x$ are not modified by the worm but by a standard Monte Carlo process; hence, their distribution is not biased in any direction, and the average over the truncated lattice has small deviations from the one over the whole lattice. Note that these deviations decrease as the lattice increases in size. Unlike the integer field $k_t$, $k_x$ is not coupled to the chemical potential; therefore, it is less likely for worms to wind around the spatial dimension than the temporal. This means that the average of $k_x$ is affected by a cut in the time dimension, but deviations from the true value do not occur often and decrease as the lattice size increases. Combining all these remarks finally yields the reason for those two kinks being at the same value of the restricted lattice size for both observables in Fig.~\ref{fig:test_loss_various_lattice_sizes_only_part_of_input} and the reason why they are less pronounced for $|\phi|^2$.

\bibliographystyle{elsarticle-num}
\bibliography{bibliography}

\end{document}